\newtheorem{theorem}{Theorem}
\newtheorem{definition}{Definition}
\newtheorem{corollary}{Corollary}
\newtheorem{proposition}{Proposition}
\newtheorem{example}{Example}
\newcommand{\set}[1]{\mathcal{#1}}
\newcommand{\defined}{\triangleq}
\newcommand{\graph}{\set{G}}
\newcommand{\nodes}{\set{V}}
\newcommand{\edges}{\set{E}}
\newcommand{\N}{\set{N}}
\newcommand{\X}{\set{X}}
\newcommand{\A}{\set{A}}
\newcommand{\B}{\set{B}}
\newcommand{\eht}[2]{#1 \to #2}
\newcommand{\tail}[1]{\mathrm{tail}(#1)}
\newcommand{\head}[1]{\mathrm{head}(#1)}
\newcommand{\sessions}{\set{S}}
\def\PN{{\mathcal P( \mathcal N)}}
\def\support{{\text{Supp}
}}
\def\a{{a}}
\def\b{{b}}
\def\A{{X}^{*}}
\def\Ao{{X}}
\def\B{{Y}}
\def\iup{{i^{+}}}
\title{On the Capacity of Networks with Correlated Sources}
\author{\authorblockN{ Satyajit Thakor~\IEEEmembership{Member, IEEE},~Terence Chan~\IEEEmembership{Member, IEEE} and Alex Grant~\IEEEmembership{Senior Member,~IEEE}}\thanks{S. Thakor is with the
    Institute of Network Coding, Chinese University of Hong
    Kong. A. Grant and T. Chan are with the Institute for
    Telecommunications Research, University of South Australia. This
    work was performed in part while S. Thakor was with the
    Institute for Telecommunications Research, University of South
    Australia. T. Chan and A. Grant are supported in part by the
    Australian Research Council under Discovery Projects DP088022 and DP1094571.}}
\begin{document}
\maketitle

\begin{abstract}
  Characterizing the capacity region for a network can be extremely difficult. Even with independent sources, determining the capacity region can be as hard as the open problem of characterizing all information inequalities. The majority of computable outer bounds in the literature are relaxations of the Linear Programming bound which involves entropy functions of random variables related to the sources and link messages.  When sources are not independent, the problem is even more complicated.  Extension of linear programming bounds to networks with correlated sources is largely open. Source dependence is usually specified via a joint probability distribution, and one of the main challenges in extending linear programming bounds is the difficulty (or impossibility) of characterizing arbitrary dependencies via entropy functions. This paper tackles the problem by answering the question of how well entropy functions can characterize correlation among sources.  We show that by using carefully chosen auxiliary random variables, the characterization can be fairly ``accurate''. 
 \end{abstract}

\section{Introduction}

The fundamental question in network coding is to determine the
required link capacities to transmit the sources to the
sinks. Characterizing the network coding capacity region is extremely hard \cite{ChaGra08}.  When the sources are independent, the capacity region depends only on the source entropy rates. However,
when the sources are dependent, the capacity region depends on the
detailed structure of the joint source distribution.

Following \cite{Yeu02},  a linear programming outer bound was developed for dependent sources~\cite{ThaChaGra11} (see also \cite{ThaGraCha12}).  This bound is specified by a set of information inequalities and equalities, and source dependence is represented by the entropy function 
\begin{align}\label{1}
h(\alpha) \triangleq H(X_{s}^{i} , s\in\alpha), \alpha \subseteq \sessions
\end{align}
where $\sessions$ is an index set for the sources and $\{(X_{s}^{i},
s\in \sessions), i = 1, \ldots, K \}$ are $K$ independent and
identically distributed copies of the $|\sessions|$ dependent sources. Thus
each $(X_{s}^{i}, s\in\{1,\dots,S\})$ has the same joint distribution
as the sources, but are independent across different $i$.

However \eqref{1} fails to properly characterize source dependence.  
We also note that the capacity regions (or best known achievable
regions) for many classic multiterminal problems are also expressed as
optimizations of linear combinations of joint entropies, subject to
linear constraints (e.g. markov constraints) on joint entropies. If it
were not for the specified joint distributions on the
sources/side-information etc. typically present in such problems,
numerical solution would be achieved by a linear program. Again, if it
were possible to somehow accurately capture the dependence of random variables
using entropies, it would lead to a convenient computational approach.

A natural question arises: How accurately can arbitrary dependencies be specified via entropies alone? We will show that by using auxiliary random variables, entropies can in fact be sufficient.

%

\subsection*{Organization}
This work of characterizing correlation between random variables using entropy functions was mainly motivated by the problem of characterizing outer bounds on the capacity of networks with correlated sources. In Section \ref{sec:background} we review known outer bounds characterized using graph theoretic approach (referred as graphical bounds) as well as outer bounds using geometrical approach (referred as geometric bounds). These bounds are not tight and can be tightened by introducing new auxiliary random variables which more accurately describe correlation between the source random variables. In Section \ref{sec:Improved Outer Bounds}, we give a general framework for improving outer bounds with introduction of auxiliary random variables. In Section \ref{sec:ExampleCorrelated}, we demonstrate by an example that our LP bound can can in fact be tightened via the use of auxiliary random variables.  In Section \ref{sec:AuxCommonInfo} and Section \ref{sec:AuxLinearCorr}, we present two approaches to construct auxiliary random variables to tighten the outer bounds.  
The constructions via these two approaches are direct generalizations of the auxiliary random variables designed in Example \ref{ex:CorrelatedSources}, Section \ref{sec:ExampleCorrelated}. In Section \ref{sec:DistViaEnt}, we deal with the more general problem of characterizing probability distribution using entropy functions.

\section{Background}\label{sec:background}


Despite its importance, the maximal gain that can be obtained by network coding is still largely unknown, except in a few scenarios \cite{AhlCai00,YeuZha99}. One example is the single-source scenario where the capacity region is characterized by
the max-flow bound \cite{AhlCai00} (see also \cite[Chapter 18]{Yeu08}) and linear network codes maximize throughput \cite{LiYeu03}. However, when it involves more than one source, the problem can become quite difficult.

The problem becomes even more complex when the sources are correlated. In the classical literature, the problem of communicating correlated sources over a network is called distributed source compression. For networks of error-free channels with edge capacity constraints, the distributed source compression problem is a feasibility problem: given a network with edge capacity constraints and the joint probability distribution of correlated sources available at certain nodes, is it feasible to communicate the correlated sources to demanding nodes?

A relevant important problem is of separation of distributed source coding and network coding \cite{RamJaiChoEff06}. Specifically,  distributed source coding and network coding are separable if and only if optimality is not sacrificed  by separately designing source and network codes. It has been shown in \cite{RamJaiChoEff06} that the separation holds for two-source two-sink networks however it has been shown by examples that that the separation fails for two-source three-sink and three-source two-sink networks.

In this section, we present known outer bounds on the capacity of networks with correlated sources. We first describe network model and define network code and achievable rate. We then present known graphical and geometric outer bounds.

A network is modelled as a graph $\mathcal G = (\mathcal V, \mathcal E)$ where $\mathcal V$ is the set of nodes and $\mathcal E$ is the set of directed edges between certain pairs of nodes. Associated with each edge $e \in \mathcal E$ is a non-negative real number $C_e\geq0$ called the capacity of the edge $e$. For edges $e,f\in\edges$, we write $\eht{f}{e}$ as a shorthand for $\head{f}=\tail{e}$.  Similarly, for an edge $f\in\edges$ and a node $u\in\nodes$, the notations $\eht{f}{u}$ and $u\rightarrow f$ respectively denote $\head{f}=u$ and $\tail{f}=u$. Let $\sessions$ be an index set for a number of multicast sessions, and let $\{Y_{s}: s \in \sessions\}$ be the set of source variables. These sources are available at the nodes identified by the mapping 
\begin{equation}
a : \sessions \mapsto \nodes.
\end{equation}
Each source may be demanded by multiple sink nodes, identified by the mapping
 \begin{equation}
b : \sessions \mapsto 2^\mathcal V.
\end{equation}
where, $2^\mathcal V$
 is the set of all subsets of $\mathcal V$.
Each edge $e\in\edges$ carries a random variable $U_e$ which is a function of incident edge random variables and source random variables.

For a given network $\graph=(\nodes,\mathcal E)$ and
connection requirement $a$ and $b$, a network code is a set of mappings from input random variables (sources and incoming edges) to output random variables (outgoing
edges) at a network node. The mapping must obey constraints implied by the topology. The alphabets of source random variables $Y_{s}$ and edge random variables $U_{e}$ are denoted by $\mathcal Y_{s}, s \in \mathcal S$ and $\mathcal U_{e}, e \in \mathcal E$, respectively.

\begin{definition}[Network code]
A network code $(\Phi,\Psi)$ for a given network $\graph(\nodes,\mathcal E)$ is described by sets of its encoding functions $\Phi$ and decoding functions $\Psi$.

\begin{align}
\Phi &= \Biggl\{\phi_{e}: \prod_{s \in S : s \rightarrow e} \mathcal Y_{s} \times \prod_{f \in \mathcal E: \eht{f}{e}} \mathcal U_{f}
 \longmapsto \mathcal U_{e} , e \in \mathcal E \Biggr\}\\
\Psi &= \Biggl\{\psi_{u}: \prod_{f \in \mathcal E: \eht{f}{u}} \mathcal
U_{f} \longmapsto \mathcal Y_{s}, u \in b(s), s \in \mathcal S \Biggr\}
\end{align}
\end{definition}

Now we define an achievable rate tuple. The definition below is different from the usual definition of an achievable rate \cite[Definition 21.2]{Yeu08} in that the source rates are fixed and the link capacity constraints are variable.

\begin{definition}[Achievable rate tuple]
Consider a given network $\mathcal G=(\mathcal V,\mathcal E)$ with discrete memoryless sources   $\{Y_s, s\in {\mathcal S}\}$ and underlying probability distribution  $P_{Y_{\mathcal S}}(\cdot)$. 
%
A link capacity tuple $\mathbf{{C}} = (C_{e}: e \in \mathcal E)$ is called achievable if there exists a sequence of network codes $\phi_{\mathcal G}^{(n)}$ such that for every $e \in \mathcal E$ and every $s \in
\mathcal S$
\begin{align}
\lim_{n \rightarrow \infty} n^{-1} \log |\mathcal U^{(n)}_{e}| &\leq C_{e} \label{eq:sequenceofNC1}\\
\lim_{n \rightarrow \infty} \textrm{Pr}\{\psi_{u}^{(n)}(U^{(n)}_{f}: f \rightarrow u) \neq Y^{(n)}_s\} &= 0, \forall u \in b(s)\label{eq:sequenceofNC2}
\end{align}
where $\psi_{u}^{(n)}(U^{(n)}_{f}: f \rightarrow u)$ is the decoded estimate of $Y^{(n)}_s$ at node $u$ from $(U^{(n)}_{f}: f \rightarrow u)$ via mapping $\psi_{u}^{(n)}$.
\end{definition}

The set of all achievable link capacity tuples is denoted by $\mathcal R_{\text{cs}}$ where the subscript describes correlated source case.

\subsection{Graphical Bounds}
In \cite{Han11}, the author gave a necessary and sufficient condition for $\mathcal R_{\text{cs}}$ when each sink requires all the sources.\footnote{The results were generalized for networks with noisy channels. However, in this paper we are mainly concerned with networks with error-free channels.} This result includes the necessary and sufficient condition \cite{Han80}, \cite{BarSer06} for networks in which every source is demanded by single sink as a special case.

\begin{theorem}[Theorem 3.1, \cite{Han11}]\label{thm:Han bound}
For networks of error free channels, the transmission of sources $Y_s:s \in \mathcal W$ is feasible if and only if
\begin{equation}
H(Y_{\mathcal W}|Y_{\mathcal W^c}) \leq \min_{\mathcal T} \sum_{e:\substack{
 \head{e} \in \mathcal T,\\
 \tail{e} \in \mathcal T^{c}
}} C_{e}
\end{equation}
where source sessions $s: s \in \mathcal W$ are available at some nodes in $\mathcal T$ and all source sessions $s : s \in \mathcal W$ are demanded by at least one node in $\mathcal T^c$, i.e., this is the min-cut of the graph.
\end{theorem}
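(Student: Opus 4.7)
The plan is to prove both directions separately, treating necessity by a standard cut-set argument and sufficiency by a distributed source coding scheme combined with multicast network coding.

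For the necessity (converse) direction, I would fix any subset $\mathcal T\subseteq\nodes$ such that every source $Y_s$ with $s\in\mathcal W$ is generated in $\mathcal T$ and demanded by at least one node in $\mathcal T^c$, and let $\mathcal E_{\mathcal T}=\{e\in\edges:\tail{e}\in\mathcal T^c,\ \head{e}\in\mathcal T\}$ be the edge set crossing the cut in the direction of information flow toward the demanding sinks. Using the code definition, the decoded estimates of $Y_{\mathcal W}^{(n)}$ at the sinks in $\mathcal T^c$ are functions of $Y_{\mathcal W^c}^{(n)}$ (available in $\mathcal T^c$) together with $(U_e^{(n)}:e\in\mathcal E_{\mathcal T})$. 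Applying Fano's inequality and the data processing inequality gives
\begin{equation*}
nH(Y_{\mathcal W}\mid Y_{\mathcal W^c}) \le H(U_e^{(n)}:e\in\mathcal E_{\mathcal T}) + n\epsilon_n \le \sum_{e\in\mathcal E_{\mathcal T}}\log|\mathcal U_e^{(n)}| + n\epsilon_n,
\end{equation*}
and dividing by $n$ and passing to the limit via \eqref{eq:sequenceofNC1} yields the claimed inequality for that particular cut; taking the minimum over all such $\mathcal T$ finishes the converse.

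For sufficiency, the plan is a two-layer construction: a Slepian--Wolf style random binning of the source block $Y_{\mathcal S}^{(n)}$ at each source node, followed by a linear multicast network code that transports the resulting bin indices from every source node to every sink over the underlying graph. Because each sink wants all sources, the network-coding layer reduces to a multi-source multicast problem whose achievable rate region, for edge-capacitated graphs, is characterized by the min-cut conditions and is achieved by (random) linear network codes. The Slepian--Wolf layer guarantees that, given the set of bin indices delivered to a sink together with any side information $Y_{\mathcal W^c}$ that originates in or is decodable at that sink's side of the cut, the sources $Y_{\mathcal W}$ can be recovered with vanishing error as long as the per-cut rate allocated to $\mathcal W$ exceeds $H(Y_{\mathcal W}\mid Y_{\mathcal W^c})$. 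Combining the two, the cut condition for every $\mathcal W$ is exactly what is needed for a feasible allocation of bin-index rates through the multicast network code.

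The main obstacle, which is where Han's argument does the real work, is to show that all exponentially many cut/side-information constraints can be satisfied simultaneously by a \emph{single} code, rather than separately for each $\mathcal W$. I would handle this by a joint random-coding and typicality argument: randomize the Slepian--Wolf bin assignment and the linear network code independently, then for each subset $\mathcal W\subseteq\sessions$ and each cut $\mathcal T$ witnessing that $\mathcal W$ is separable from its sinks, union-bound the probability of a decoding error over all jointly typical source confusions. The hypothesis $H(Y_{\mathcal W}\mid Y_{\mathcal W^c})\le\min_{\mathcal T}\sum_{e\in\mathcal E_{\mathcal T}}C_e$ makes each exponent strictly negative, and since there are only finitely many subsets $\mathcal W$ and cuts $\mathcal T$, the union bound still yields vanishing overall error probability, establishing the existence of a deterministic sequence of codes with the required properties.
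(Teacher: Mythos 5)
First, note that the paper does not prove this statement at all: it is quoted verbatim as Theorem~3.1 of \cite{Han11}, so there is no in-paper proof to compare your attempt against. Judged on its own terms, your converse is the standard cut-set argument and is essentially sound. Two small points: you inherit the paper's odd edge orientation ($\head{e}\in\mathcal T$, $\tail{e}\in\mathcal T^c$ points \emph{into} the side containing the $\mathcal W$-sources; the intended cut edges are those carrying information from that side toward the demanding sinks), and you should justify, by induction along a topological order of the acyclic graph, that every quantity computable inside $\mathcal T^{c}$ is a function of $Y_{\mathcal W^c}^{(n)}$ together with the cut-crossing messages before invoking Fano and data processing.

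The achievability half has a genuine gap. The two-layer architecture you propose --- Slepian--Wolf binning at the source nodes followed by multicast network coding of the resulting bin indices as if they were independent messages --- is known \emph{not} to achieve Han's region in general. This is precisely the failure of source/network coding separation discussed in Section~II of this very paper (citing \cite{RamJaiChoEff06}): separation holds for two-source two-sink networks but fails already for two-source three-sink and three-source two-sink multicast networks, all of which satisfy the hypothesis that every sink demands all sources. The difficulty is that your scheme requires a \emph{single} bin-index rate tuple lying simultaneously in the Slepian--Wolf region and in the network's multicast region for independent messages, and Han's per-cut condition does not guarantee that such a tuple exists. The correct route (Han's, and likewise the random linear network coding approach of Ho et al.) dispenses with the separate binning layer: apply random maps directly to the source blocks at every node, and let each sink perform joint typicality (or minimum-entropy) decoding on its raw received symbols; the error event in which exactly the sources indexed by $\mathcal W$ are confused then has an exponent governed by the gap between $\min_{\mathcal T}\sum_{e}C_{e}$ and $H(Y_{\mathcal W}\mid Y_{\mathcal W^c})$, and the finite union bound over subsets $\mathcal W$ and cuts $\mathcal T$ that you describe in your last paragraph goes through. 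So your closing paragraph contains the right analytical idea, but it must be applied to a joint source--network code rather than to the separated architecture you set up.
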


As mentioned above, for a few special cases a necessary and sufficient condition for reliable transmission of correlated sources over a network is given in \cite{BarSer06}, \cite{RamJaiChoEff06} and \cite{Han11}. However, the problem is an uncharted area in general. Until recently there did not even exist in the current literature a nontrivial necessary condition for reliable transmission of correlated sources in general multicast networks. In \cite{ThaGraCha12}, we made the first attempt to address this problem by characterizing a graph based bound, called the ``functional dependence bound'', for networks with correlated sources with arbitrary sink demands. The functional dependence bound was initially characterized for network with independent sources \cite{ThaGraCha09}. Later in \cite{ThaChaGra11}, we showed that the functional dependence bound is also an outer bound for networks with correlated sources.

In \cite{ThaGraCha12} we gave an abstract definition of a functional dependence graph, which expressed a set of local dependencies between random variables. In particular, we described a test for functional dependence, and gave a basic result relating local and global dependence. Below is the functional dependence bound based on the implications of functional dependence.

\begin{theorem}[Functional dependence bound \cite{ThaGraCha12}]\label{thm:mainresult1}
  Let $\graph=(\nodes,\edges)$ be a functional dependence
  graph 
   on the (source and edge) random variables
  $Y_\sessions,U_\edges$. Let $\boldsymbol{\mathcal M}$ be the
  collection of all maximal irreducible sets \cite[Definition 25]{ThaGraCha12}. Then
\begin{equation}
h(Y_{\mathcal W}|Y_{\mathcal W^c}) \leq \min_{\{U_{\mathcal A},Y_{\mathcal W^c}\} \in \boldsymbol{\mathcal M}} \sum_{e \in \mathcal A} C_e,
\end{equation}
where $\{ U_{\mathcal A},Y_{\mathcal W^c}\} \in \boldsymbol{\mathcal M}$ and $Y_{\mathcal W},Y_{\mathcal W^c} \subseteq Y_{\mathcal S}$. 
\end{theorem}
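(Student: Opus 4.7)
The plan is to use the structural property of maximal irreducible sets established in [ThaGraCha12, Definition 25 and companion lemma]: every random variable in the functional dependence graph is a function of any maximal irreducible set. Once this is in hand, the inequality reduces to a short entropy chain combined with a capacity bound on edge variables, essentially the ``cut-set'' reasoning adapted to the FDG.

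Concretely, I would fix an arbitrary $\{U_{\mathcal A},Y_{\mathcal W^c}\}\in\boldsymbol{\mathcal M}$ and consider a sequence of network codes achieving the link capacity tuple $(C_e:e\in\edges)$. Let $(Y^{(n)}_{\sessions},U^{(n)}_{\edges})$ denote the $n$-block variables. Using the maximal-irreducibility of $\{U_{\mathcal A},Y_{\mathcal W^c}\}$, together with the decoding requirements and Fano's inequality applied iteratively along the edges outside $\mathcal A$, I would argue that
\begin{equation*}
H\bigl(Y^{(n)}_{\mathcal W}\,\big|\,U^{(n)}_{\mathcal A},Y^{(n)}_{\mathcal W^c}\bigr)\leq n\,\delta_n,
\end{equation*}
where $\delta_n\to 0$. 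Then by the chain rule,
\begin{equation*}
H\bigl(Y^{(n)}_{\mathcal W}\,\big|\,Y^{(n)}_{\mathcal W^c}\bigr)\leq H\bigl(U^{(n)}_{\mathcal A}\,\big|\,Y^{(n)}_{\mathcal W^c}\bigr)+H\bigl(Y^{(n)}_{\mathcal W}\,\big|\,U^{(n)}_{\mathcal A},Y^{(n)}_{\mathcal W^c}\bigr)\leq\sum_{e\in\mathcal A}H\bigl(U^{(n)}_e\bigr)+n\delta_n\leq\sum_{e\in\mathcal A}\log|\mathcal U^{(n)}_e|+n\delta_n.
\end{equation*}

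Since the sources are i.i.d.\ across the block, $H(Y^{(n)}_{\mathcal W}\,|\,Y^{(n)}_{\mathcal W^c})=n\,h(Y_{\mathcal W}\,|\,Y_{\mathcal W^c})$, so dividing by $n$ and letting $n\to\infty$ using the achievability conditions \eqref{eq:sequenceofNC1}--\eqref{eq:sequenceofNC2} gives $h(Y_{\mathcal W}\,|\,Y_{\mathcal W^c})\leq\sum_{e\in\mathcal A}C_e$. Because the choice of maximal irreducible set was arbitrary, minimizing over $\boldsymbol{\mathcal M}$ yields the claimed bound.

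The main obstacle is the middle step: propagating the vanishing-error (Fano) slack through the iterative functional-dependence argument needed to conclude that \emph{every} variable outside a maximal irreducible set is asymptotically a function of the variables inside it. In the deterministic FDG setting this is immediate from Definition~25 of \cite{ThaGraCha12}, but in the achievable-rate setting the edge variables are only functions of their parents up to Fano terms and the source variables are only recoverable at the sinks asymptotically. The delicate bookkeeping is to show that the total accumulated error along the dependence chain still vanishes as $n\to\infty$, so that the clean entropy chain above holds with an $n\delta_n$ remainder rather than blowing up. Once this is established, the remaining inequalities are standard.
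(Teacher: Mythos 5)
Your outline is the standard cut-set-style argument, and the final entropy chain (chain rule, subadditivity, $H(U^{(n)}_e)\le\log|\mathcal U^{(n)}_e|$, then normalizing by $n$ and invoking \eqref{eq:sequenceofNC1}) is correct as far as it goes. Note, however, that the paper itself offers no proof of this theorem --- it is imported verbatim from \cite{ThaGraCha12} --- so there is nothing in this document to compare your argument against line by line; the comparison has to be against what the cited result actually requires.

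Judged on that basis, your proposal has a genuine gap, and it is exactly the one you flag yourself: the bound $H(Y^{(n)}_{\mathcal W}\mid U^{(n)}_{\mathcal A},Y^{(n)}_{\mathcal W^c})\le n\delta_n$ is not a technical footnote but the entire mathematical content of the theorem. Everything else in your chain is routine; this step is where the definition of a maximal irreducible set does its work. What is needed is the ``local implies global'' functional dependence lemma of \cite{ThaGraCha12}: in the deterministic FDG, every edge variable is a function of its parents, and an iterative ancestral argument shows that the complement of a maximal irreducible set is a deterministic function of that set, giving $H(Y_{\mathcal W}\mid U_{\mathcal A},Y_{\mathcal W^c})=0$ exactly. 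You invoke this but do not prove it, and you then compound the difficulty by trying to carry it out in the asymptotic achievability setting, where decoding is only correct up to a Fano term; the ``delicate bookkeeping'' you mention (showing the accumulated slack along the dependence chain stays $O(n\delta_n)$ rather than growing with the depth of the graph) is precisely the part you leave unresolved. A cleaner route, and the one consistent with how the theorem is stated here (as an inequality on the single-letter entropy function $h$ of the FDG variables, with $H(U_e)\le C_e$), is to first prove the deterministic statement $H(Y_{\mathcal W}\mid U_{\mathcal A},Y_{\mathcal W^c})=0$ from irreducibility alone, and only afterwards handle the passage to achievable rates as a separate step (this is what \cite{ThaChaGra11} does for correlated sources). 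As written, your proposal is a correct plan with its central lemma asserted rather than established.
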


The \emph{functional dependence region} is defined as follows.
%

\begin{equation}\label{eq:R_FD}
\mathcal R_{FD} \triangleq \bigcap_{\mathcal W \subseteq \mathcal S} \left\{ \mathbf{h} \in \mathbb{R}_{+}^{2^{|\mathcal S|}-1}:h(Y_{\mathcal W}|Y_{\mathcal W^c}) \leq \min_{\{U_{\mathcal A},Y_{\mathcal W^c}\} \in \boldsymbol{\mathcal M}} \sum_{e \in \mathcal A} C_e\right\}
\end{equation}
where edge-sets $\mathcal A \subseteq \mathcal E$ are subsets of maximal irreducible sets.

We also generalized existing bounding techniques that characterize geometric bounds for multicast networks with independent sources for networks with correlated sources.

\subsection{Geometric Bounds}\label{sec:CorrelatedBound}
In this section, we focus on outer bounds on achievable rate region for networks with correlated sources using geometric approach. We present outer bounds $\mathcal R_{\text{cs}} (\overline{\Gamma^*})$ by using the set of almost entropic variables $\overline{\Gamma^*}$, and $\mathcal R_{\text{cs}}(\Gamma)$ (again called LP bound)
by using the set of polymatroid variables $\Gamma$ similar to the bounds given for independent sources in \cite[Chapter 15]{Yeu02}. 



\begin{definition}\label{def:regionsCS}
Consider a network coding problem for a set of correlated source random variables $\{Y_{s},s\in\mathcal S \}$. Let $\mathcal R_{\mathrm{cs}}(\Delta)$ be the set of all link capacity tuples $\mathbf{C} = (C_{e}: e \in \mathcal E)$
such that there exists a function $h \in \Delta$ (over the set of variables $\{X_{s},s\in \mathcal S, U_{e},e\in\mathcal E\}$) satisfying the following constraints:
\begin{align}
h (X_{\mathcal W}: \mathcal W \subseteq \mathcal S)&=H(Y_{\mathcal W}) \label{eq:R_out1}\\
h (U_{e}|\{X_{s}: a(s) \rightarrow e\},\{U_{f}: f \rightarrow e\}) &= 0\\
h (X_{s}: u \in b(s)| U_{e}: e \rightarrow u)) &=0, \forall u \in b(s)\\
h (U_{e}) &\leq C_{e} \label{eq:R_out5}
\end{align}
for all $s\in\mathcal S$ and $e\in \mathcal E$.
\end{definition}

Taking $\Delta$ as $\overline{\Gamma^*}$ and $\Gamma$ in Definition \ref{def:regionsCS} gives us regions $\mathcal R_{\mathrm{cs}}(\overline{\Gamma^*})$ and $\mathcal R_{\mathrm{cs}}(\Gamma)$ respectively.

\begin{theorem}[Outer bound \cite{ThaGraCha12}]\label{thm:R_Out}
$$\mathcal R_{\mathrm{cs}} \subset \mathcal R_{\mathrm{cs}}(\overline{\Gamma^*})$$
\end{theorem}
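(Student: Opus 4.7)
The plan is a standard converse-style argument: take any achievable link capacity tuple and exhibit an (almost) entropic function in the closure $\overline{\Gamma^*}$ that satisfies the four constraints \eqref{eq:R_out1}--\eqref{eq:R_out5} of Definition~\ref{def:regionsCS}. Fix $\mathbf C=(C_e:e\in\edges)\in\mathcal R_{\mathrm{cs}}$ and let $\{\phi^{(n)}_{\graph},\psi^{(n)}_{\graph}\}$ be the guaranteed sequence of network codes with vanishing error probability. For each blocklength $n$, the source and edge variables $\{Y_s^{(n)}:s\in\sessions\}$ and $\{U_e^{(n)}:e\in\edges\}$ induced by the code have a well-defined joint distribution and hence a joint entropy function. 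Define the rescaled function $h^{(n)}$ on subsets of $\{X_s,s\in\sessions\}\cup\{U_e,e\in\edges\}$ by
\begin{equation*}
h^{(n)}(\alpha)\defined \tfrac{1}{n}H\bigl(Y^{(n)}_{\alpha\cap\sessions},U^{(n)}_{\alpha\cap\edges}\bigr).
\end{equation*}
By construction each $h^{(n)}$ is entropic, i.e.\ $h^{(n)}\in\Gamma^*$.

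I would then verify the four required constraints for $h^{(n)}$ up to vanishing error terms. First, since the $n$ copies of the sources are i.i.d.\ with joint law $P_{Y_\sessions}$, the identity $h^{(n)}(X_{\W})=H(Y_{\W})$ holds exactly for every $\W\subseteq\sessions$, giving \eqref{eq:R_out1}. Second, each edge variable $U_e^{(n)}$ is, by the definition of the encoder $\phi_e$, a deterministic function of the incident source block and incoming edge blocks, so $h^{(n)}(U_e\mid \{X_s:a(s)\to e\},\{U_f:f\to e\})=0$ exactly. Third, for the capacity constraint, the alphabet bound $H(U_e^{(n)})\le\log|\set U_e^{(n)}|$ together with \eqref{eq:sequenceofNC1} gives $\limsup_n h^{(n)}(U_e)\le C_e$, which will yield \eqref{eq:R_out5} after taking a limit. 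Fourth, for the decoding constraint, Fano's inequality applied to \eqref{eq:sequenceofNC2} shows that $\tfrac{1}{n}H(Y^{(n)}_s\mid U^{(n)}_f: f\to u)\to 0$ for every $u\in b(s)$, so the conditional-entropy expression in the third constraint of Definition~\ref{def:regionsCS} holds in the limit.

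To finish, I would extract a convergent subsequence. Because each $h^{(n)}$ is a polymatroid with $h^{(n)}(\{U_e\})\le C_e + \epsilon_n$ and the source marginals are fixed, the sequence lies in a bounded region of $\Real^{2^{|\sessions|+|\edges|}-1}$, and by Bolzano--Weierstrass admits a limit $h^{*}$. Since $h^{*}$ is a limit of entropic functions, $h^{*}\in\overline{\Gamma^*}$ by definition. The four constraints of Definition~\ref{def:regionsCS} pass to the limit (the first two are exact for every $n$, the other two involve non-strict inequalities / conditional entropies that tend to zero), so $h^{*}$ certifies $\mathbf C\in\mathcal R_{\mathrm{cs}}(\overline{\Gamma^*})$.

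The main subtlety I anticipate is the decoding constraint: strictly speaking, Fano gives vanishing conditional entropy of $Y_s^{(n)}$ given the incoming edge block at a single sink, but Definition~\ref{def:regionsCS} requires a joint statement over all demanded sources at $u$. This is handled by applying Fano to the joint source vector demanded at $u$ and using the chain rule, which yields the required zero conditional entropy in the limit. The other mildly delicate point is ensuring that the limit $h^{*}$ genuinely lies in $\overline{\Gamma^*}$ rather than merely in the closure of polymatroids $\Gamma$; this is immediate from the fact that each approximating $h^{(n)}$ is itself entropic, not merely polymatroidal.
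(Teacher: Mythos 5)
The paper itself offers no proof of Theorem~\ref{thm:R_Out}; it is imported verbatim from the cited reference. Your converse argument is the standard route such a result is established by (form the per-blocklength entropy vector of the code's source and edge variables, check the four constraints of Definition~\ref{def:regionsCS} exactly or up to vanishing terms via functional dependence, the alphabet bound, and Fano, then pass to a limit point), and it is essentially correct, including your handling of the joint-decoding version of Fano at each sink.

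One technical claim in your write-up is false as stated and should be repaired, although the repair is routine. You assert that each normalized vector $h^{(n)}(\alpha)=\tfrac{1}{n}H(Y^{(n)}_{\alpha\cap\sessions},U^{(n)}_{\alpha\cap\edges})$ is itself entropic, i.e.\ lies in $\Gamma^*$, and you lean on this again at the end (``immediate from the fact that each approximating $h^{(n)}$ is itself entropic''). The unnormalized vector $nh^{(n)}$ is entropic, but $\Gamma^*$ is not closed under scaling by $\tfrac{1}{n}$, so $h^{(n)}\in\Gamma^*$ does not follow. What saves the argument is that $\overline{\Gamma^*}$ is a closed convex cone \cite{ZhaYeu97}, so $h^{(n)}=\tfrac{1}{n}\,(nh^{(n)})\in\overline{\Gamma^*}$ for every $n$, and the Bolzano--Weierstrass limit $h^*$ then lies in $\overline{\Gamma^*}$ by closedness. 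With that substitution the proof goes through; the remaining steps (exactness of \eqref{eq:R_out1} and the encoding constraint for every $n$, and preservation of the non-strict inequality and the vanishing conditional entropies under the limit) are sound.
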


It is well known that the region $\overline{\Gamma^*}$ is closed and convex \cite{ZhaYeu97}. Moreover, the regions defined by the constraints \eqref{eq:R_out1}-\eqref{eq:R_out5} are also closed and convex. Replacing $\overline{\Gamma^*}$ by $\Gamma$ in Theorem
\ref{thm:R_Out}, we obtain an outer bound $\mathcal R_{\text{cs}}(\Gamma)$, for capacity of networks with correlated sources. This bound (a linear programming bound) is an outer bound for the achievable rate region since
$\overline{\Gamma^*} \subset \Gamma$ and Theorem \ref{thm:R_Out} implies
\begin{theorem}[Outer bound \cite{ThaGraCha12}]\label{thm:R_Out2}
\begin{equation}\label{eq:boundsCS}
\mathcal R_{\mathrm{cs}} \subset \mathcal R_{\mathrm{cs}}(\overline{\Gamma^*}) \subset \mathcal R_{\mathrm{cs}}(\Gamma).
\end{equation}
\end{theorem}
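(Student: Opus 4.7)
The plan is to observe that the statement splits into two inclusions, the first of which, $\mathcal R_{\mathrm{cs}} \subset \mathcal R_{\mathrm{cs}}(\overline{\Gamma^*})$, is exactly Theorem \ref{thm:R_Out} and hence requires no additional argument. Thus the only content to prove here is the second inclusion $\mathcal R_{\mathrm{cs}}(\overline{\Gamma^*}) \subset \mathcal R_{\mathrm{cs}}(\Gamma)$. The key structural point is that $\mathcal R_{\mathrm{cs}}(\Delta)$, as given in Definition \ref{def:regionsCS}, is manifestly monotone in the set $\Delta$: enlarging $\Delta$ enlarges the family of admissible witness functions $h$ that could certify a rate tuple $\mathbf{C}$, so the resulting capacity region can only grow. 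Hence it suffices to exhibit the set-level inclusion $\overline{\Gamma^*} \subseteq \Gamma$ and then transport witnesses.

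Concretely, I would first recall that $\Gamma$ is the polymatroidal cone cut out by the Shannon-type (submodular and monotone) inequalities, and that any genuine entropy function satisfies all of these inequalities. Therefore $\Gamma^* \subseteq \Gamma$; since $\Gamma$ is closed (it is defined by finitely many non-strict linear inequalities), taking closures gives $\overline{\Gamma^*} \subseteq \Gamma$. Then, given $\mathbf{C} \in \mathcal R_{\mathrm{cs}}(\overline{\Gamma^*})$, pick an $h \in \overline{\Gamma^*}$ verifying \eqref{eq:R_out1}--\eqref{eq:R_out5}; this same $h$ lies in $\Gamma$ and satisfies the same constraints unchanged, so $\mathbf{C} \in \mathcal R_{\mathrm{cs}}(\Gamma)$. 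This gives the desired inclusion and completes the chain \eqref{eq:boundsCS}.

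There is no genuine obstacle here: the whole argument is a two-line monotonicity observation once one accepts the already-stated Theorem \ref{thm:R_Out} together with the standard fact $\overline{\Gamma^*} \subseteq \Gamma$. The only thing worth being careful about is that the constraints \eqref{eq:R_out1}--\eqref{eq:R_out5} are purely linear in the coordinates of $h$ (they fix marginal values to the source entropies $H(Y_{\mathcal W})$, impose linear functional-dependence equalities, and bound $h(U_e)$ by $C_e$), so they transfer verbatim from one ambient set to the other without any continuity or limiting argument beyond what is already built into $\overline{\Gamma^*}$.
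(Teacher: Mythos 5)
Your proposal is correct and follows essentially the same route as the paper: the first inclusion is quoted from Theorem~\ref{thm:R_Out}, and the second follows from the containment $\overline{\Gamma^*} \subset \Gamma$ together with the monotonicity of $\mathcal R_{\mathrm{cs}}(\Delta)$ in $\Delta$ from Definition~\ref{def:regionsCS}. The paper states this in one line, while you additionally spell out why $\overline{\Gamma^*} \subseteq \Gamma$ and why the witness $h$ transfers, but the argument is the same.
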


It is possible that the outer bounds $\mathcal R_{\text{cs}}(\overline{\Gamma^*})$ and $\mathcal R_{\text{cs}}(\Gamma)$ given above, in terms of the region of almost entropic vectors $\overline{\Gamma^*}$ and the region of polymatroid vectors $\Gamma$, may not be tight since the representation of
the regions $\overline{\Gamma^*}$ or $\Gamma$ together with constraints \eqref{eq:R_out1}-\eqref{eq:R_out5} do not capture the exact correlation of source random variables, i.e., the exact joint probability distribution. This is because the same entropy vector induced by
the correlated sources may be satisfied by more than one probability distribution. The importance of incorporating the knowledge of source correlation (joint distribution) to improve the cut-set bound is also recently and independently investigated in \cite{GohYanJagg11}.

\section{Improved Outer Bounds}\label{sec:Improved Outer Bounds}
In this section, we give a general framework for improved outer bounds using auxiliary random variables. In Section \ref{sec:ExampleCorrelated} we will demonstrate by an example that the outer bound $\mathcal R_{\text{cs}}(\Gamma)$ is not tight and also give an explicit improved outer bound which is strictly better than the outer bound $\mathcal R_{\text{cs}}(\Gamma)$. In Section \ref{sec:AuxCommonInfo} and \ref{sec:AuxLinearCorr}, we present two generalizations of Example \ref{ex:CorrelatedSources} to construct auxiliary random variables to obtain improved bounds.

\begin{definition}\label{def:Improved Outer Bound}
Consider a set of correlated sources $\{Y_s, s\in {\mathcal S}\}\div$ with underlying probability distribution $P_{Y_{\mathcal S}}(\cdot)$. Construct any auxiliary random variables
$K_{i},i\in \mathcal L$ by choosing a conditional probability distribution function
$P_{K_{\mathcal L}|Y_{\mathcal S}}(\cdot)$.

Let $\mathcal R'_{\mathrm{cs}}(\overline{\Gamma^*})$ be the set of all link capacity tuples $\mathbf{{C}} = (C_{e}: e \in \mathcal E)$ such that there exists an almost entropic function $ h  \in \overline{\Gamma^*}$ satisfying the following constraints:
\begin{align}
 h (X_{\mathcal W}: \mathcal W \subseteq \mathcal S, Z_{\mathcal Z}: \mathcal Z \subseteq \mathcal L)&=H(Y_{\mathcal W},K_{\mathcal Z})\\
 h (U_{e}|\{Y_{s}: a(s) \rightarrow e\},\{U_{f}: f \rightarrow e\}) &= 0\\
 h (Y_{s}: u \in b(s)| U_{e}: e \rightarrow u) &=0, \forall u \in b(s)\\
 h (U_{e}) &\leq C_{e}
\end{align}
for all $s\in\mathcal S$ and $e\in \mathcal E$.
\end{definition}

Similarly, an outer bound $\mathcal R'_{\text{cs}}(\Gamma)$ can be defined in terms of polymatroid function $h \in \Gamma$.

\begin{theorem}[Improved Outer bounds]
\begin{equation}
\mathcal R_{\mathrm{cs}} \subseteq
\mathcal R'_{\mathrm{cs}}(\overline{\Gamma^*}) \subseteq \mathcal R_{\mathrm{cs}}(\overline{\Gamma^*})\subseteq \mathcal R_{\mathrm{cs}}(\Gamma)
\end{equation}
\begin{equation}
\mathcal R_{\mathrm{cs}} \subseteq \mathcal R'_{\mathrm{cs}}(\Gamma) \subseteq \mathcal R_{\mathrm{cs}}(\Gamma)
\end{equation}
\end{theorem}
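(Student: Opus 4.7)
The plan is to verify the chain of inclusions piece by piece. The middle inclusion $\mathcal R_{\mathrm{cs}}(\overline{\Gamma^*}) \subseteq \mathcal R_{\mathrm{cs}}(\Gamma)$ is already asserted in Theorem \ref{thm:R_Out2}: any $h \in \overline{\Gamma^*}$ witnessing membership on the left is, by $\overline{\Gamma^*} \subseteq \Gamma$, also a polymatroid witnessing membership on the right. So the real work is (i) $\mathcal R_{\mathrm{cs}} \subseteq \mathcal R'_{\mathrm{cs}}(\overline{\Gamma^*})$ and (ii) $\mathcal R'_{\mathrm{cs}}(\overline{\Gamma^*}) \subseteq \mathcal R_{\mathrm{cs}}(\overline{\Gamma^*})$, with the analogous statements for $\Gamma$ handled the same way.

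For (ii), I would marginalize. Given $(C_e) \in \mathcal R'_{\mathrm{cs}}(\overline{\Gamma^*})$, let $h'$ be a witness on the ground set $\{X_s, Z_i, U_e\}$. Define $h$ as the restriction of $h'$ to the coordinates indexed by subsets of $\{X_s, U_e\}$. Projection is a continuous linear map that sends entropic vectors to entropic vectors (via marginalization of the underlying distribution), so it sends $\overline{\Gamma^*}$ into itself; thus $h \in \overline{\Gamma^*}$. The constraint $h(X_\mathcal W) = H(Y_\mathcal W)$ is the $\mathcal Z = \emptyset$ slice of the first constraint of Definition \ref{def:Improved Outer Bound}, while the remaining three constraints there already involve only $X_s$ and $U_e$, so they carry over verbatim. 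Hence $(C_e) \in \mathcal R_{\mathrm{cs}}(\overline{\Gamma^*})$. The identical argument works with $\Gamma$ in place of $\overline{\Gamma^*}$, since projection also preserves the polymatroidal axioms.

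For (i), I would mimic the construction used to prove Theorem \ref{thm:R_Out}, but in the enlarged variable space $\{X_s, Z_i, U_e\}$. Fix $(C_e) \in \mathcal R_{\mathrm{cs}}$ and a sequence of network codes $\phi_\graph^{(n)}$ satisfying \eqref{eq:sequenceofNC1}--\eqref{eq:sequenceofNC2}. These induce a joint distribution on $(Y_\mathcal S^{(n)}, U_\mathcal E^{(n)})$; extend it by drawing $K_\mathcal L^{(n)}$ as $n$ conditionally i.i.d.\ copies generated symbol-by-symbol from the sources via the given $P_{K_\mathcal L | Y_\mathcal S}$. Let $h^{(n)} \triangleq n^{-1} H(\cdot)$ be the resulting (entropic, hence in $\overline{\Gamma^*}$) scaled entropy vector. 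Then (a) $h^{(n)}(X_\mathcal W, Z_\mathcal Z) = H(Y_\mathcal W, K_\mathcal Z)$ by the i.i.d.\ construction; (b) $h^{(n)}(U_e \mid \{X_s : a(s) \to e\}, \{U_f : f \to e\}) = 0$ because edge variables are deterministic functions of their inputs under the code; (c) $h^{(n)}(X_s : u \in b(s) \mid U_e : e \to u) \to 0$ by Fano's inequality applied to \eqref{eq:sequenceofNC2}; and (d) $h^{(n)}(U_e) \leq C_e + o(1)$ by \eqref{eq:sequenceofNC1}. The sequence is bounded in $\overline{\Gamma^*}$ (each coordinate dominated by the sum of source, auxiliary and edge entropies), so along a subsequence it converges to some $h$; closedness of $\overline{\Gamma^*}$ places $h$ in $\overline{\Gamma^*}$, and all four constraints pass to the limit exactly, giving $(C_e) \in \mathcal R'_{\mathrm{cs}}(\overline{\Gamma^*})$.

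The remaining assertions $\mathcal R_{\mathrm{cs}} \subseteq \mathcal R'_{\mathrm{cs}}(\Gamma)$ and $\mathcal R'_{\mathrm{cs}}(\Gamma) \subseteq \mathcal R_{\mathrm{cs}}(\Gamma)$ now follow by composing (i) with $\overline{\Gamma^*} \subseteq \Gamma$ for the first, and by applying the projection argument of (ii) to a polymatroid witness for the second. The main obstacle is the limit step inside (i): one must argue simultaneously that the $h^{(n)}$ remain in a common compact set, that the strict equalities of Definition \ref{def:Improved Outer Bound} (the source-marginal constraint and the encoding functional constraint) hold exactly for every finite $n$, and that the approximate Fano-type and rate constraints sharpen to the required inequalities in the limit. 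This last point is precisely where closedness of $\overline{\Gamma^*}$ is essential, and it also explains why the tightening via auxiliary random variables does not discard any achievable rate tuple.
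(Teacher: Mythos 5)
Your proposal is correct, but note that the paper itself gives no proof of this theorem at all --- it is stated bare, with the authors implicitly relying on the converse argument behind Theorem \ref{thm:R_Out} and on the observation that $\mathcal R'_{\mathrm{cs}}$ is cut out by a superset of the constraints defining $\mathcal R_{\mathrm{cs}}(\cdot)$. Your write-up supplies exactly the two steps that this hand-waving elides and that genuinely need an argument: the projection step (ii), which is required because the primed region lives on an enlarged ground set containing the $Z_i$, so the inclusion $\mathcal R'_{\mathrm{cs}}(\Delta)\subseteq\mathcal R_{\mathrm{cs}}(\Delta)$ is not a literal ``more constraints, smaller set'' statement but needs marginalization of the witness together with the fact that restriction preserves $\overline{\Gamma^*}$ and $\Gamma$; and the augmented converse (i), where adjoining $K_{\mathcal L}^{(n)}$ memorylessly through $P_{K_{\mathcal L}\mid Y_{\mathcal S}}$ keeps the first constraint an exact equality for every $n$ while Fano and the rate condition handle the rest in the limit. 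Two minor points you should make explicit: $h^{(n)}=n^{-1}H(\cdot)$ lies in $\overline{\Gamma^*}$ because $\overline{\Gamma^*}$ is a convex cone (the unscaled vector is entropic, the scaled one need not be); and for the $\Gamma$ chain the limit argument is not even needed for membership, since $\Gamma$ is closed and polyhedral and every $h^{(n)}$ already satisfies the polymatroid axioms --- only the constraints need to pass to the limit. Neither affects correctness.
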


An improved functional dependence bound can also be obtained from the functional dependence bound by introducing auxiliary random variables. The improvement of the bounds of the form in Definition \ref{def:Improved Outer Bound} over the bound without using auxiliary random variables solely depends on the construction of auxiliary random variables.



\subsection{Looseness of the Outer Bounds}\label{sec:ExampleCorrelated}
In this section, we demonstrate by an example that
\begin{enumerate}
\item the LP bound $\mathcal R_{\text{cs}}(\Gamma)$ presented in Section \ref{sec:CorrelatedBound} is in fact loose and
\item the bounds derived in Section \ref{sec:CorrelatedBound} can be tightened by introducing auxiliary random variables.
\end{enumerate}
\begin{example}\label{ex:CorrelatedSources}
In Figure \ref{fig:net-cs}, three correlated sources $Y_1, Y_2, Y_3$ are available at node 1 and are demanded at nodes $3,4,5$ respectively. The edges from node $2$ to nodes $3,4,5$ have sufficient capacity to carry the random
variable $U_1$ available at node 2. The correlated sources $Y_1, Y_2, Y_3$ are defined as follows.
\begin{align}
Y_1 = (b_0,b_1)\\
Y_2 = (b_0,b_2)\\
Y_3 = (b_1,b_2)
\end{align}
where $b_0,b_1,b_2$ are independent, uniform binary random variables.
\begin{figure}[htbp]
\centering
  \includegraphics[scale=0.5]{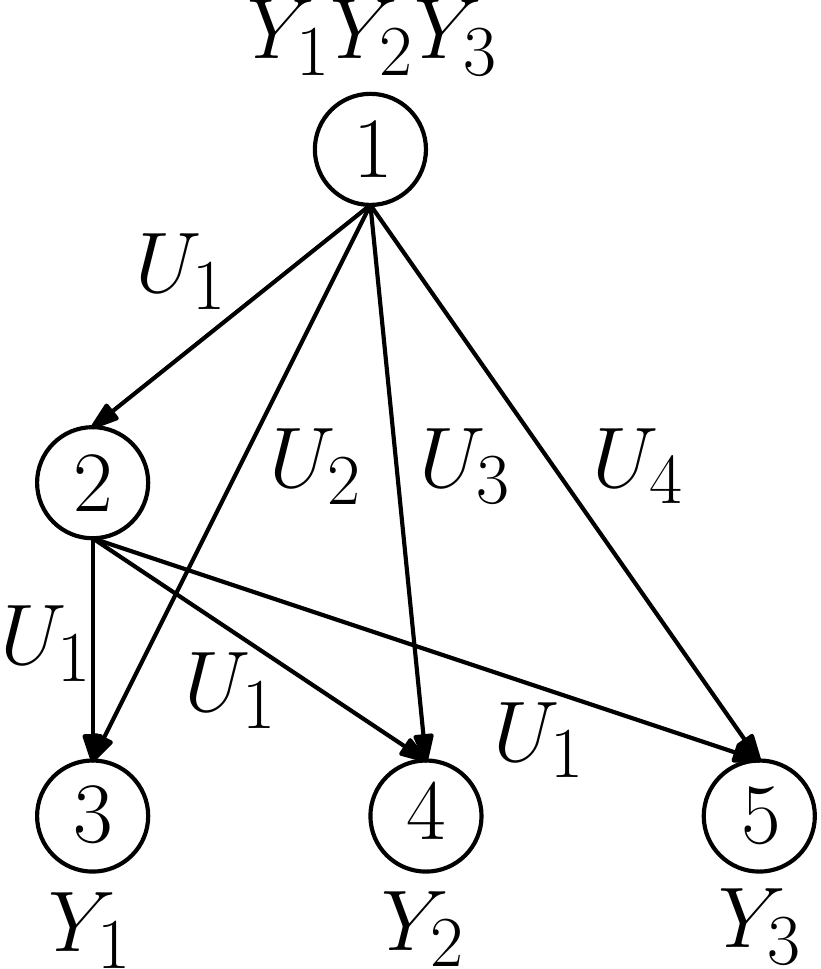}
  \caption{A network with correlated sources.}\label{fig:net-cs}
\end{figure}

\begin{definition}
The LP bound $\mathcal R_{\text{cs}}(\Gamma)$ for the network in Figure \ref{fig:net-cs} is the set of all link capacity tuples $\textbf{\textrm{C}} =(C_e, e=1,...,4)$ such that there exists $ h \in \Gamma$  satisfying the following constraints.
\begin{align}
 h (X_s)&=2, s=1,2,3 \label{eq:ex-cs1}\\
 h (X_i,X_j) &=3, i \neq j, i,j \in \{1,2,3\}\\
 h (U_e|X_1,X_2,X_3)&=0, e=1,2,3,4\\
 h (X_1|U_2U_1)&=0\\
 h (X_2|U_3U_1)&=0\\
 h (X_3|U_4U_1)&=0\\
 h (U_e)  &\leq C_e, i=1,...,4 \label{eq:ex-cs8}
\end{align}
\end{definition}

Note that the link capacity tuple $(C_e=1, e=1,...,4)$ is in the region $\mathcal R_{\text{cs}}(\Gamma)$ 
by choosing $h$ as the entropy function of the following random variables:
%
\begin{align*}
X_1 &= (b_0,b_1),& U_{1} & =b_{0}\\
X_2 &= (b_0,b_2) & U_{2} & = b_{1}\\
X_3 &= (b_0,b_1 \oplus b_2) & U_{3} & = b_{2} \\
& & U_{4} & = b_{1} \oplus b_{2}
\end{align*}
which satisfies \eqref{eq:ex-cs1}-\eqref{eq:ex-cs8} and polymatroidal axioms since these are random variables.

%

\def\bfh{{h}}

Now, we will characterize an improved LP bound by constructing auxiliary random variables $Z_{0},Z_{1},Z_{2}$.
\begin{definition}\label{def:ImrovedLPnet-cs}
An improved LP bound $\mathcal R'_{\text{cs}}(\Gamma)$ for the network in Figure \ref{fig:net-cs} is the set of all link capacity tuples $\textbf{\textrm{C}} =(C_e, e=1,...,4)$ such that there exists $ h \in \Gamma$  satisfying the following constraints.
\begin{align}
 h (X_s)&=2, s=1,2,3  \label{eq:iLPs}\\
 h (X_i,X_j) &=3, i \neq j, i,j \in \{1,2,3\}\\
 h (Z_\alpha)&=|\alpha|, \alpha \subseteq \{0,1,2\}  \\
\bfh({X_{1}|Z_{0},Z_{1}}) &=0 \\
\bfh({X_{2}|Z_{0},Z_{2}}) &=0 \\
\bfh({X_{3}|Z_{0},Z_{3}}) &=0 \\
\bfh(Z_0Z_1)&=h(X_1)\\
\bfh(Z_0Z_2)&=h(X_2)\\
\bfh(Z_1Z_2)&=h(X_3)\\
 h (U_e|X_1,X_2,X_3)&=0, e=1,2,3,4\\
 h (X_1|U_2U_1)&=0\\
 h (X_2|U_3U_1)&=0\\
 h (X_3|U_4U_1)&=0\\
 h (U_e)  &\leq C_e, i=1,...,4.  \label{eq:iLPe}
\end{align}
\end{definition}

Note that, by Definition \ref{def:ImrovedLPnet-cs}, the link capacity tuple $\textbf{\textrm{C}} =(C_e,
e=1,...,4)$ is in the improved LP bound $\mathcal R'_{\text{cs}}(\Gamma)$ if and only if there exists a  polymatroidal $h$  satisfying \eqref{eq:iLPs}-\eqref{eq:iLPe}. In the following, we prove that the link capacity tuple $(C_e=1, e=1,...,4)$ is indeed not in $\mathcal R'_{\text{cs}}(\Gamma)$, Definition \ref{def:ImrovedLPnet-cs}, and hence is not achievable.

Suppose to the contrary that $(C_e=1, e=1,...,4)$ is in $\mathcal R'_{\text{cs}}(\Gamma)$. Then by definition, there exists  a polymatroid $h$  satisfying \eqref{eq:iLPs}-\eqref{eq:iLPe}.
From these constraints, it is easy to prove that
\begin{align}
\bfh(U_{1}|Z_0Z_1)&=0\\
\bfh(U_{1}|Z_0Z_2)&=0\\
\bfh(U_{1}|Z_1Z_2)&=0 \\
\bfh(Z_{0}Z_{1}Z_{2}) & = \bfh(Z_{0}) + \bfh(Z_{1}) + \bfh(Z_{1}). \label{eq:4}
\end{align}
As $\bfh(U_{1}|Z_0Z_2) = 0$, it implies that
\begin{equation}
I_{\bfh}(U_{1};Z_{1}|Z_0Z_2) = 0.
\end{equation}

On the other hand, by \eqref{eq:4}, we have

\begin{equation}
I_{\bfh}(Z_{1} ; Z_{2} |Z_{0} ) = 0.
\end{equation}

Therefore,
\begin{equation}
I_{\bfh}(Z_{1} ; Z_{2}, U_{1} | Z_{0} ) = 0
\end{equation}
and consequently,
\begin{equation}
I_{\bfh}(Z_{1} ; U_{1} | Z_{0} ) = 0.
\end{equation}

Together with $\bfh(U_{1} | Z_{0},Z_{1}) = 0$, this implies
$\bfh(U_{1} | Z_{0}) = 0$. Similarly, we can also prove
that
\begin{equation}
\bfh(U_{1} | Z_{2})  = \bfh(U_{1} | Z_{0}) = 0.
\end{equation}

Using the same  argument,  we can once again prove that
$\bfh(U_{1}|Z_{1}) = \bfh(U_{1}|Z_{2}) = 0$ and $\bfh(Z_{1}Z_{2}) = \bfh(Z_{0}) + \bfh(Z_{1})$
implies $H(U_{1}) = 0$.

Finally,
$\bfh(X_{1}  | U_{1},U_{2}) = 0$ implies
\begin{align}
2 = \bfh(X_{1}) & \le \bfh( U_{1},U_{2}) \\
& \le \bfh( U_{1}) + \bfh( U_{2}) \\
& = \bfh( U_{2})  \\
& \le 1.
\end{align}

A contradiction occurs and hence, there exists no such polymatroidal $\bfh$
which satisfies \eqref{eq:iLPs}-\eqref{eq:iLPe}.
 In other words,
the  link capacity tuple $(C_e=1, i=1,...,4)$ is not in $\mathcal R'_{\text{cs}}(\Gamma)$, Definition \ref{def:ImrovedLPnet-cs}. (End of Example 1)
\end{example}


In Definition \ref{def:Improved Outer Bound}, we present new improved outer bounds on the capacity region of networks with correlated sources using auxiliary random variables. However, there is one problem that remains to be solved: How to construct auxiliary random variables that can tighten the bounds or more generally, can lead to characterization of the capacity region for networks with correlated sources. While it appears to be a hard problem to answer in general, we propose three approaches to construct auxiliary random variables. First, we propose to construct auxiliary random variables from common information. 

\subsection{Auxiliary Random Variables from Common Information}\label{sec:AuxCommonInfo}


The first approach is to construct an auxiliary random variable which is almost the common information of two random variables. This approach is a direct generalization of Example \ref{ex:CorrelatedSources} in the previous section in a sense that the auxiliary random variables in Example \ref{ex:CorrelatedSources} are precisely the common information between pairs of source random variables. This fact also implies that the approach leads to characterization of improved bounds.

\begin{definition}[Common Information \cite{GacKor73}]
For any random variables $X$ and $Y$, the \emph{common information} of $X$ and $Y$ is the  random variable (denoted by $C(X,Y)$) which has the maximal entropy among all other random variables $K$ such that
\begin{align}
    H(K|X) &=0\\
    H(K|Y) &=0.
\end{align}
%
%
\end{definition}

In many cases, it is not easy to find the common information between two random variables. For example, let $Z$ be a binary random variable such that $\Pr(Z=0)=\epsilon >0$ and $\Pr(Z=1) = 1-\epsilon$. Suppose $X$ is another binary
random variable independent of $Z$ and $Y=X \oplus Z$. Then if \cite{GacKor73} ( see also \cite{Wyn73})
\begin{align}
    H(K|X) &=0\\
    H(K|Y) &=0,
\end{align}
then $H(K)=0$ even if $X$ and $Y$ are almost the same for sufficiently small $\epsilon$. 

To address this issue, we propose a different way to construct auxiliary random variables. Consider any pair of random variables $\{X,Y\}$ with probability distribution $ P_{XY}(\cdot)$. For any $\delta \ge 0$, let
\begin{equation}
  \mathcal P(\delta) \triangleq \left\{ P_{K|XY}(\cdot):
  \begin{array}{l l}
     H(K|X) \leq \delta,\\
    H(K|Y) \leq \delta,\\
    I(X;Y|K) \leq \delta
  \end{array} \right\}
\end{equation}
where the probability distribution of $\{X,Y,K\}$ is given by
\begin{equation}
\Pr(X=x,Y=y,K=k) \defined P_{XY}(X=x,Y=y)P_{K|XY}(K=k|X=x,Y=y).
\end{equation}
Note that the ``smaller'' the $\delta$ is, the more similar the random variable $K$ (associated with the conditional distribution  $P_{K|XY}$) is to the common information.

Our constructed random variable will be selected from $\mathcal P(\delta^{*})$ to formulate an improved LP bound where
\begin{equation}
\delta^{*} = \min_{\delta : \mathcal P(\delta) \neq \emptyset} \delta.
\end{equation}

For a multi-source multicast network with source random variables $Y_1,\ldots,Y_{|\mathcal S|}$ one can construct random variables $K_{ij}, i \neq j, i,j \in \mathcal S$ from the family of distributions
\begin{equation}
  \mathcal P(\delta) \triangleq \left\{ P_{K_{ij}|Y_i,Y_j}(\cdot):
  \begin{array}{l l}
     H(K_{ij}|Y_i) \leq \delta,\\
    H(K_{ij}|Y_j) \leq \delta,\\
    I(Y_i;Y_j|K_{ij}) \leq \delta
  \end{array} \right\}.
\end{equation}

An improved LP bound for a multi-source multicast network with source random variables $Y_1,\ldots,Y_{|\mathcal S|}$ can be computed by constructing the random variables $K_{ij}, i \neq j, i,j \in \mathcal S$ and then taking inequalities
\begin{align}
     H(K_{ij}|Y_i) \leq \delta_{ij},\\
    H(K_{ij}|Y_j) \leq \delta_{ij},\\
    I(Y_i;Y_j|K_{ij}) \leq \delta_{ij}
\end{align}
into consideration apart from constraints \eqref{eq:R_out1}- \eqref{eq:R_out5} and elemental inequalities.




\subsection{Linearly Correlated Random Variables}\label{sec:AuxLinearCorr}
In some scenarios, source random variables are ``linearly correlated''. In this section we present a construction method for auxiliary random variables describing linear correlation between random variables. This approach is also a direct generalization of Example \ref{ex:CorrelatedSources} in the previous section in a sense that the source random variables are linearly correlated. 

\begin{definition}
A set of random variables $\{Y_{1}, \ldots, Y_{n}\}$ is called linearly correlated if
\begin{enumerate}
\item for any $\a \subseteq \{1,\ldots, n\}$, the support of the probability distribution of $(Y_{i},i\in\a)$ is a vector subspace and
\item $(Y_{\a}:\a \subseteq \{1,\ldots,n\})$ is uniformly distributed.
\end{enumerate}
\end{definition}

Let $\{Y_{1}, \ldots, Y_{n}\}$ be a set of linearly correlated random variables with support vector subspaces
\begin{equation}
V_i \subseteq \mathbb{F}^{m}_{q}
\end{equation}
and
\begin{equation}
\langle V_i : i \in n \rangle = \langle \B_i : i \in m \rangle \label{eq:SecLinear1}
\end{equation}
where $\B_i,i \in m$ are linearly independent. That is, $\{\B_i, i \in m\}$ is a basis for the subspaces $\{V_i : i \in n\}$. It can be noticed that, there exists a set of linearly independent random variables $K_{1}, \ldots, K_{m}$ uniformly distributed over the support $\{\B_i, i \in m\}$ induced from a basis of the vector spaces $\{V_i : i \in n\}$. That is,
\begin{equation}
H(K_{\a})=\a \log_2 m. \label{eq:SecLinear2}
\end{equation}
The random variable $Y_i$ can be written as a function of random variables $K_1,\ldots,K_m$ as follows
\begin{equation}
Y_i = [K_1 \ldots K_m] \mathbf{A}^i
\end{equation}
where 
\begin{equation}
\mathbf{A}^i=
 \begin{bmatrix}
  a^i_{1,1} & a^i_{1,2} & \cdots & a^i_{1,\mathrm{dim}(V_i)} \\
  a^i_{2,1} & a^i_{2,2} & \cdots & a^i_{2,\mathrm{dim}(V_i)} \\
  \vdots  & \vdots  & \ddots & \vdots  \\
  a^i_{m,1} & a^i_{m,2} & \cdots & a^i_{m,\mathrm{dim}(V_i)}
 \end{bmatrix}
\end{equation}
is an $m \times \mathrm{dim}(V_i)$ coefficient matrix. 

Thus, the random variables $Y_1,\ldots,Y_n$ are linear functions of the random variables $K_{1}, \ldots, K_{m}$. In particular, a random variable $Y_i$ is a function of the random variables $K_j$ such that the coefficient of $K_j$ is non-zero. Then we have the following equalities.
\begin{align}
H(Y_i|K_j: a^i_{jl}\neq0, \forall l \in \{1,\ldots, \mathrm{dim}(V_i)\},a^i_{jl}\in \mathbf{A}^i ) = 0\label{eq:SecLinear3}\\
H(Y_i)=H(K_j: a^i_{jl}\neq0, \forall l \in \{1,\ldots, \mathrm{dim}(V_i)\},a^i_{jl}\in \mathbf{A}^i )\label{eq:SecLinear4}
\end{align}

An improved LP bound can be computed by taking equalities \eqref{eq:SecLinear2},\eqref{eq:SecLinear3} and \eqref{eq:SecLinear4} into consideration apart from constraints \eqref{eq:R_out1}- \eqref{eq:R_out5} and elemental inequalities.

\section{Probability Distribution using Entropy Functions}\label{sec:DistViaEnt}

The basic question is: \emph{How ``accurate'' can entropy function specify the correlation among random variables}? We partly answer the question by 
showing that the joint probability distribution among random variables can be completely specified by entropy functions subject to some moderate constraint. First, we describe a few notations.

\subsubsection*{Notations}
Let $\mathcal N=\{1,\ldots,n\}$ and $X$ be a random variable.
Assume without loss of generality that  $X$ has a \emph{positive} probability distribution over $\N$.
Let $\PN$ be the set of all \emph{nonempty} subsets of $\{2,\ldots,n\}$.  
The size of the support\footnote{Roughly speaking, $\support(Y)$ is the number of possible values that $Y$ can take with positive probabilities.} of a random variable $Y$ will be denoted by $\support(Y)$. For notational simplicity, we will not distinguish a set with a single element $\{ i \}$ and the element $i$.   Two random variables are regarded as equivalent if they are functions of each other. Therefore, $\Ao$ and $\Ao + 1$ are regarded as equivalent.

Let 
\begin{align}
h_{b}(q) \triangleq -q \log q - (1-q) \log (1-q).
\end{align}
The function $h_{b}(q)$ is not one-to-one over the interval. Yet, we will use 
$h_{b}^{-1}(\delta)$ to define as the unique $q \in [0, 1/2] $ such that 
\begin{align}
h_{b}(q) = \delta.
\end{align}


\subsection{Single Random Variable Case}

First we consider the problem of characterizing distribution of single random variable via entropy functions. To understand the idea, consider a binary random variable $X$ such that $p_{X}(0) = p$ and 
$p_{X}(1) = 1-p$. While the entropy of $X$ does not determine exactly what the probabilities of $X$ are, it essentially determines the probability distribution (up to permutations).  To be precise, let $0\le q \le 1/2$  such that 
$
H(X) = h_{b}(q)
$
where
$
h_{b}(q) \triangleq -q\log q - (1-q) \log (1-q).
$
Then 
either $p=q$ or $p= 1-q$. Furthermore, the two possible distributions are in fact permutations of each other.

When $X$ is not binary, the entropy $H(X)$ alone is not sufficient to characterize the probability distribution of $X$. However, by using auxiliary random variables, it turns out that the distribution of $X$ can still be determined.  

The idea is best demonstrated by an example. Suppose $X$ is ternary, taking values from the set $\{1,2,3\}$. Suppose also that $p_{X}(x) > 0$ for all $x\in \{1,2,3\}$.
Define random variables $A_{1}$, $A_{2}$ and $A_{3}$ such that 
\begin{align}\label{2b}
A_{i} = 
\begin{cases}
1 & \text{ if } X = i \\
0 & \text{ otherwise. }
\end{cases}
\end{align}
Clearly, 
\begin{align}\label{2a}
H(A_{i}|X) = 0
\end{align}
 and 
\begin{align}\label{2}
H(A_{i}) = h_{b}(p_{X}(i)).
\end{align}

Let us further assume that $p_{X}(i) \le 1/2$ for all $i$. Then by \eqref{2} and strict monotonicity of 
$h_{b}(q)$ in the interval $[0, 1/2]$, it seems at the first glance that the distribution of $X$ is uniquely specified by the entropies of the auxiliary random variables. However, this is only half of the story and there is a catch in the argument -- The auxiliary random variables chosen are not arbitrary. When we ``compute'' the probabilities of $X$ from the entropies of the auxiliary random variables, it is assumed to know how the random variables are constructed. Without knowing the ``construction'',  it is unclear how to find the probabilities of $X$ from entropies.
More precisely, suppose we only know that there exists auxiliary random variables $A_{1},A_{2},A_{3}$ such that \eqref{2a} holds and their entropies are given by  
\eqref{2} (without knowing that the random variables are specified by \eqref{2b}). Then we cannot determine precisely what the distribution of $X$ is.  
Having said that, in this paper we will show that the distribution of $X$ can in fact be fully characterized by the ``joint entropies'' of the auxiliary random variables.

\subsubsection{Construction of auxiliary random variables}

\begin{definition}[Constructing auxiliary random variables $\A_{\a}$]\label{def:RVsAalpha}
%
For any $\a \in \PN$, let $\A_{\a}$ be the auxiliary random variables such that 
\begin{equation}
  \A_{\a} = \left\{
  \begin{array}{l l}
    1 & \quad \text{if $\A \in \a$}\\
    0 & \quad \text{otherwise}\\
  \end{array} \right.
\end{equation}
\end{definition}

Notice that $\A_{\a} = 0$ when $\A=1$.

\begin{proposition}[Property 1: Distinct]\label{prop:distinct}
For any distinct  $\a,  \b \in \PN$, then
\begin{align}
H(\A_\a|\A_{\b})&>0, \text{ and } \label{prop1:a}\\
H(\A_{\b}|\A_{\a})&>0. \label{prop1:b}
\end{align}
\end{proposition}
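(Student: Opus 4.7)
The plan is to show that for distinct $\alpha,\beta \in \PN$ the random variable $\A_{\alpha}$ cannot be a function of $\A_{\beta}$; the symmetric statement then follows by swapping the two. Because $H(\A_{\alpha} \mid \A_{\beta}) = 0$ is equivalent to $\A_{\alpha}$ being constant on each level set of $\A_{\beta}$, and both variables are $\{0,1\}$-indicators built explicitly from $\A$, I would translate the entropy condition into a purely set-theoretic statement about $\alpha$ and $\beta$.

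First I would observe that, since $\A$ takes every value in $\N$ with strictly positive probability, the two level sets of $\A_{\beta}$ are literally $\beta$ and $\N \setminus \beta$, and every element of each set is actually realised. Hence $\A_{\alpha}$ is a function of $\A_{\beta}$ if and only if $\A_{\alpha}$ is constant on $\beta$ and constant on $\N \setminus \beta$. Constancy on $\beta$ amounts to $\beta \subseteq \alpha$ or $\alpha \cap \beta = \emptyset$; constancy on $\N \setminus \beta$ amounts to $\N \setminus \beta \subseteq \alpha$ or $\alpha \subseteq \beta$.

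Next I would rule out all four resulting combinations: (i) $\beta \subseteq \alpha$ together with $\N \setminus \beta \subseteq \alpha$ forces $\alpha = \N$, contradicting $1 \notin \alpha$; (ii) $\beta \subseteq \alpha$ together with $\alpha \subseteq \beta$ gives $\alpha = \beta$, contradicting distinctness; (iii) $\alpha \cap \beta = \emptyset$ together with $\N \setminus \beta \subseteq \alpha$ forces $1 \in \alpha$ (since $1 \notin \beta$), again contradicting $\alpha \subseteq \{2,\ldots,n\}$; (iv) $\alpha \cap \beta = \emptyset$ together with $\alpha \subseteq \beta$ forces $\alpha = \emptyset$, contradicting $\alpha \in \PN$. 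Exchanging the roles of $\alpha$ and $\beta$ then yields the second inequality.

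The argument is a short finite case analysis, so there is no real technical obstacle. The only care needed is to use both structural hypotheses at the right moments, namely $\alpha,\beta \subseteq \{2,\ldots,n\}$ (which drives the ``$1$ is missing'' contradictions in cases (i) and (iii)) and the positivity of the distribution of $\A$ on $\N$ (which is what makes ``$\A_{\alpha}$ is determined in distribution by $\A_{\beta}$'' equivalent to the pointwise set containments used above).
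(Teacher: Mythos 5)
Your proof is correct and takes essentially the same route as the paper's: an elementary set-theoretic case analysis resting on the positivity of the distribution of $\Ao$, the fact that $1\notin \a\cup\b$ (so the joint event $\A_\a=\A_\b=0$ has positive probability), and the distinctness of the two index sets. The paper's own proof leaves the conclusion as ``easily checked''; your four-case enumeration of the constancy conditions on the level sets of $\A_\b$ is precisely that omitted verification, so the two arguments coincide in substance.
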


\begin{proof}
First note that $1 \in \mathcal N \setminus \{\a \cup \b\}$ and hence
\begin{equation}
\text{Pr}(\A_{\a}=0, \A_{\b}=0) >0.
\end{equation}

Since $\a, \b$ are nonempty and distinct, there are two possible cases.
In the first case, $\a \cap \b $ is nonempty. In this case, it can be checked easily that  either $\a \setminus  \b$,  $\b \setminus \a$, or both must be nonempty.  In the  second case, $\a \cap \b = \emptyset$. Then 
clearly both $\a \setminus  \b$ and  $\b \setminus \a$ must be nonempty. 
Finally, since $A$ has strictly positive probability distribution, then we can easily check that the theorem holds. 
\end{proof}

\begin{proposition}[Property 2: Subset]\label{prop:secAux6}
Suppose $\a,\b \in \PN$. Then 
\begin{align}\label{prop2:a}
H(\A_{\a}|\A_i, i\in\b ) >0
\end{align}
if and only if $\a \setminus \b $ is nonempty. 
\end{proposition}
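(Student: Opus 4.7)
The plan is to prove both directions directly from the definition of $A_\alpha$ (Definition \ref{def:RVsAalpha}) together with the fact that $X$ has a strictly positive distribution on $\mathcal N$.

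For the ``if'' direction, suppose $\alpha \setminus \beta$ is nonempty, and pick some $j \in \alpha \setminus \beta$. I would exhibit two distinct realizations of $X$ that make the conditioning variables $(A_i : i \in \beta)$ agree while forcing $A_\alpha$ to disagree. The two realizations are $X=1$ and $X=j$. Since $\beta \subseteq \{2,\ldots,n\}$, we have $1 \notin \beta$, and since $j \in \alpha \setminus \beta$, we also have $j \notin \beta$; therefore both $X=1$ and $X=j$ produce $A_i = 0$ for every $i \in \beta$. On the other hand, $1 \notin \alpha$ (because $\alpha \in \PN$ excludes $1$) gives $A_\alpha = 0$ when $X=1$, while $j \in \alpha$ gives $A_\alpha = 1$ when $X=j$. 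Because $X$ has strictly positive probability on $\mathcal N$, both realizations occur with positive probability, so $A_\alpha$ cannot be a deterministic function of $(A_i : i \in \beta)$, giving $H(A_\alpha \mid A_i, i\in\beta) > 0$.

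For the ``only if'' direction, suppose $\alpha \setminus \beta = \emptyset$, i.e., $\alpha \subseteq \beta$. Then $A_\alpha$ is literally computable from $(A_i : i \in \beta)$, since
\begin{equation}
A_\alpha = \max_{i \in \alpha} A_i = \sum_{i \in \alpha} A_i,
\end{equation}
where the sum collapses to $\max$ because the $A_i$'s are indicators of disjoint events. Hence $A_\alpha$ is a function of $(A_i : i \in \beta)$, so $H(A_\alpha \mid A_i, i \in \beta) = 0$.

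I do not foresee any real obstacle: both directions are immediate from the combinatorics of the indicator construction. The only subtle point is remembering that $\alpha,\beta \subseteq \{2,\ldots,n\}$ (so that $1$ is always available as a ``witness'' for positivity of $\Pr(A_i = 0,\ i \in \beta)$), and that $X$ has strictly positive distribution on $\mathcal N$ so every single-value atom has positive mass. This mirrors the argument already used in the proof of Proposition \ref{prop:distinct}.
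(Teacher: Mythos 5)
Your proof is correct and is exactly the direct verification the paper invokes without spelling out: the witness pair $X=1$ versus $X=j$ for $j\in\a\setminus\b$ (both compatible with all conditioning indicators being zero, by positivity of the distribution) gives the ``if'' direction, and $A_{\a}=\sum_{i\in\a}A_{i}$ for $\a\subseteq\b$ gives the ``only if'' direction. Nothing is missing; this is the same argument, just made explicit.
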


\begin{proof}
By direct verification. 
\end{proof}

\begin{proposition}[Property 3: Partition]\label{prop:partition}
For any  $\a \in \PN$,  there exists random variables $\A_{\b_1},\ldots,\A_{\b_{n-2}}$  such that 
\begin{align}\label{prop3:a}
H(\A_{\b_k}|\A_{\a},\A_{\b_1},\ldots,\A_{\b_{k-1}})&>0  
\end{align}
for all $k=1, \ldots, n-2$.
\end{proposition}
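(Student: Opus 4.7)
My plan is to reformulate the condition in Proposition~\ref{prop:partition} as an iterative refinement of a set partition of $\N$, and to construct the $\b_k$'s one at a time so that each refinement splits at least one existing block.

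First I would observe that each auxiliary random variable $\A_\gamma$ (for $\gamma \in \PN$) is a binary function of $\A$, so it is equivalent to the partition $\Pi_\gamma = \{\gamma,\, \N \setminus \gamma\}$ of $\N$. A collection $\A_{\gamma_1}, \ldots, \A_{\gamma_k}$ is then jointly equivalent to the common refinement $\Pi_{\gamma_1} \vee \cdots \vee \Pi_{\gamma_k}$. Because $\A$ has strictly positive probability on every element of $\N$, every block of this joint partition has positive probability; hence for any additional $\gamma \in \PN$,
\begin{equation*}
H(\A_\gamma \mid \A_{\gamma_1}, \ldots, \A_{\gamma_k}) > 0
\end{equation*}
if and only if $\Pi_\gamma$ strictly refines $\Pi_{\gamma_1} \vee \cdots \vee \Pi_{\gamma_k}$, i.e., at least one block of the joint partition contains both an element of $\gamma$ and an element of $\N \setminus \gamma$. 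So the proposition reduces to exhibiting a sequence $\b_1, \ldots, \b_{n-2} \in \PN$ such that each $\Pi_{\b_k}$ properly refines the current joint partition starting from $\Pi_\a$.

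The key lemma I would prove is: whenever a partition $\Pi$ of $\N$ contains a block $B$ with $|B| \geq 2$, one can find $\b \in \PN$ such that $\Pi_\b$ splits $B$. The proof is a short case analysis exploiting the fact that $\PN$ consists of nonempty subsets of $\{2,\ldots,n\}$:
\begin{itemize}
\item If $1 \in B$, pick any $i \in B \setminus \{1\} \subseteq \{2,\ldots,n\}$ and set $\b = \{i\}$; then $\b \in \PN$ and $B \cap \b = \{i\}$ is a proper nonempty subset of $B$.
\item If $1 \notin B$, then $B \subseteq \{2,\ldots,n\}$, so any nonempty proper subset $\b \subsetneq B$ lies in $\PN$ and splits $B$.
\end{itemize}

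Armed with this lemma, the construction is purely greedy. The partition $\Pi_\a$ has $2$ blocks. Having chosen $\b_1, \ldots, \b_{k-1}$ so that the joint partition has exactly $k+1$ blocks, if $k+1 < n$ then some block has size at least $2$ and the lemma produces a $\b_k \in \PN$ whose $\Pi_{\b_k}$ splits that block; this increases the block count by at least one. After $n-2$ such steps we reach the discrete partition, and by the equivalence in the first paragraph each conditional entropy in \eqref{prop3:a} is strictly positive. I do not anticipate any serious obstacle; the only subtlety is the restriction that $\b_k \subseteq \{2,\ldots,n\}$, which is precisely why the block containing $1$ must be handled separately in the case analysis, and why the strict-positivity assumption on the distribution of $\A$ is essential to convert partition refinement into positive conditional entropy.
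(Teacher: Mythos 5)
Your proof is correct, but it takes a genuinely more structural route than the paper's. The paper's own proof is a one-line explicit construction: relabel $\{2,\ldots,n\}$ so that $\a=\{i,\ldots,n\}$, take $\A_{\b_k}=\A_{k+1}$ for $k=1,\ldots,n-2$ (the singleton indicators of $2,\ldots,n-1$), and assert that \eqref{prop3:a} ``can be verified directly.'' You instead isolate the underlying principle --- positivity of the distribution of $\Ao$ converts the condition $H(\A_{\b_k}|\A_{\a},\A_{\b_1},\ldots,\A_{\b_{k-1}})>0$ into strict refinement of the induced partition of $\N$ --- and then produce the sequence greedily via a splitting lemma. What your version buys is a transparent count (each step takes a partition of $\N$ from $2$ blocks to $n$ blocks one split at a time, so $n-2$ variables are exactly what is needed) and a clean treatment of the asymmetry of $\PN$, namely that the block containing $1$ can only be split by a singleton $\{i\}$ with $i\neq 1$; the paper's choice is shorter but leaves that case analysis implicit. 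One small point to tighten: you say each step increases the block count ``by at least one'' while your induction hypothesis asserts ``exactly $k+1$ blocks.'' In fact your lemma always chooses $\b\subseteq B$ for the block $B$ being split, so no other block is refined and the count goes up by exactly one; you should say so explicitly, because if a single step could split two blocks the discrete partition would be reached before step $n-2$ and the remaining $\b_k$ could not satisfy \eqref{prop3:a}.
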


\begin{proof}
Assume without loss of generality that $\a=\{i,\ldots,n\}$. Let
\begin{align}
\A_{\b_1}=\A_2, \A_{\b_2}=\A_3, \ldots,\A_{\b_{n-2}}=\A_{n-1}.
\end{align}
We can verified directly that \eqref{prop3:a} is satisfied. 
\end{proof}

In the rest of the paper,  we will assume without loss of generality that 
\begin{align}
p_{\Ao}(1) \ge p_{\Ao}(2) \ge \cdots \ge p_{\Ao}(n) > 0.
\end{align}


\begin{proposition}[Property 4: The smallest atom]\label{prop:smallestatom}
$\A_{n}$  has the minimum entropy among all $\a \in \PN$. In other words
\begin{align}\label{prop4:a}
\min_{\a \in \PN} H(\A_{\a}) = H(\A_{n}).
\end{align}
\end{proposition}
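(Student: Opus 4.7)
The plan is to reduce the problem to analysing the binary entropy function $h_b$, exploiting the fact that $X_\alpha$ is an indicator random variable.

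First I would observe that, because $X_\alpha$ takes only two values (by Definition \ref{def:RVsAalpha}), we have $H(X_\alpha) = h_b(q_\alpha)$ where $q_\alpha \triangleq \Pr(X_\alpha = 1) = \sum_{i\in\alpha} p_X(i)$. Thus $\min_{\alpha \in \PN} H(X_\alpha) = \min_{\alpha \in \PN} h_b(q_\alpha)$, and the problem becomes a purely scalar one about the binary entropy function applied to the atomic probabilities.

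Next I would exploit the two key structural features of $h_b$: symmetry ($h_b(q) = h_b(1-q)$) and strict monotonicity on $[0,1/2]$. Together these say that $h_b(q) = h_b(\min(q, 1-q))$ and that $h_b$ is minimised by making $\min(q, 1-q)$ as small as possible. So the task reduces to showing
\begin{align}
\min(q_\alpha,\; 1-q_\alpha) \;\ge\; p_X(n)
\end{align}
for every $\alpha \in \PN$, with equality achieved at $\alpha = \{n\}$.

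For the lower bound on $q_\alpha$: since $\alpha$ is nonempty and $p_X(n)$ is the smallest atom probability by the ordering assumption, $q_\alpha \ge p_X(n)$. For the lower bound on $1-q_\alpha$: by definition of $\PN$, $1 \notin \alpha$, so $1-q_\alpha \ge p_X(1) \ge p_X(n)$. This is the one place where the assumption $\alpha \subseteq \{2,\ldots,n\}$ is genuinely needed — without it, taking $\alpha = \{1\}$ could give $q_\alpha$ very close to $1$, and the corresponding $1-q_\alpha = p_X(2) + \cdots + p_X(n)$ might be smaller than $p_X(n)$.

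Finally I would note that $p_X(n) \le 1/n \le 1/2$ (since the $p_X(i)$ sum to one and $p_X(n)$ is minimum, and $n \ge 2$ because $\PN$ is nonempty), so both $p_X(n)$ and $\min(q_\alpha, 1-q_\alpha)$ lie in $[0,1/2]$, where $h_b$ is strictly increasing. Combining with the inequality above gives
\begin{align}
H(X_\alpha) \;=\; h_b(q_\alpha) \;=\; h_b\bigl(\min(q_\alpha, 1-q_\alpha)\bigr) \;\ge\; h_b(p_X(n)) \;=\; H(X_n),
\end{align}
which is the desired conclusion. There is no real obstacle here; the only subtlety is recognising that the minimum of $h_b$ on the feasible range of $q_\alpha$ is controlled by proximity to either endpoint $0$ or $1$, and that the exclusion of the index $1$ from $\alpha$ is exactly what prevents $q_\alpha$ from being too close to $1$.
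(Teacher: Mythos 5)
Your proof is correct and follows essentially the same route as the paper's: both reduce the claim to showing $p_X(n) \le \sum_{i\in\alpha} p_X(i) \le 1 - p_X(n)$, using that $p_X(n)$ is the smallest atom for the lower bound and that $1 \notin \alpha$ (so $p_X(1)$ is excluded) for the upper bound, then invoke the symmetry and monotonicity of the binary entropy function. Your write-up merely makes explicit, via the $\min(q_\alpha, 1-q_\alpha)$ formulation, the step the paper leaves implicit.
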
 
\begin{proof}
Consider $\a \in \PN$. First notice that $p(n) \le p(i)$ for all $i = 1, \ldots, n$ and hence
$
\sum_{i\in\a} p_{\Ao}(i) \ge p_{\Ao}(n).
$
On the other hand, 
\begin{align}
\sum_{i\in\a} p_{\Ao}(i) \le \sum_{i\in\a} p_{\Ao}(i) + p_{\Ao}(1) - p_{\Ao}(n)   \le 1 - p_{\Ao}(n).
\end{align}
Therefore, 
$
p_{\Ao}(n) \le \sum_{i\in\a} p_{\Ao}(i) \le  1 - p_{\Ao}(n)
$ and  consequently $H(\A_{\a}) \ge H(\A_{n})$. The proposition is thus proved.
\end{proof}


\begin{proposition}[Property 5: Singleton $X_{i}$]\label{prop:smallestatomb}
Suppose $n \ge 2$. Then for any  $2 \leq i \leq n - 1$,
\begin{align}
H(\A_{i}|\A_{i+1},\ldots,\A_n) &>0.\label{prop5:a}
\end{align}
In addition, for all $\a\in\PN$ such that   $H(\A_{\a}|\A_{i+1},\ldots,\A_n) >0$, 
\begin{align}
H(\A_{i}|\A_{i+1},\ldots,\A_n) &\leq H(\A_{\a}|\A_{i+1},\ldots,\A_n)\label{prop5:b}\\
H(\A_{i}) &\leq H(\A_{\a}). \label{prop5:c}
\end{align}

\end{proposition}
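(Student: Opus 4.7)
The plan is to exploit two structural facts about the indicators $\A_{\a} = \mathbf{1}\{\Ao \in \a\}$: first, each $\A_{\a}$ is binary, so $H(\A_{\a}) = h_{b}(\sum_{j\in\a} p_{\Ao}(j))$; second, the events $\{\Ao=j\}$ for $j \in \{2,\ldots,n\}$ are pairwise disjoint, so the tuple $(\A_{i+1},\ldots,\A_{n})$ either has all coordinates $0$ (revealing only that $\Ao \in \{1,\ldots,i\}$, an event I will call $E$) or has exactly one coordinate equal to $1$ (pinpointing the value of $\Ao$ and so making $\A_{\a}$ deterministic). Consequently,
\begin{align}
H(\A_{\a}\mid \A_{i+1},\ldots,\A_{n}) = \Pr(E)\, h_{b}\!\left(\frac{\sum_{j \in \a \cap \{2,\ldots,i\}} p_{\Ao}(j)}{\Pr(E)}\right),
\end{align}
where $\Pr(E) = \sum_{j=1}^{i} p_{\Ao}(j) > 0$ and I have used $1 \notin \a$ to drop $j=1$ from the intersection. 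The hypothesis $H(\A_{\a}\mid\A_{i+1},\ldots,\A_{n})>0$ is therefore equivalent to $\a \cap \{2,\ldots,i\}\neq\emptyset$, since nonemptiness of this set, combined with $1 \in \{1,\ldots,i\}\setminus\a$, makes the binary indicator on the right nondegenerate on $E$.

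Taking $\a = \{i\}$ immediately yields $\a\cap\{2,\ldots,i\}=\{i\}\neq\emptyset$ for every $2\le i \le n-1$, giving \eqref{prop5:a}. For \eqref{prop5:b}, setting $q = p_{\Ao}(i)/\Pr(E)$ and $q' = \sum_{j\in\a\cap\{2,\ldots,i\}} p_{\Ao}(j)/\Pr(E)$, the inequality reduces to $h_{b}(q) \le h_{b}(q')$. Since $h_{b}$ is symmetric about $1/2$ and strictly increasing on $[0,1/2]$, and $q \le 1/i \le 1/2$ (because $p_{\Ao}(i)$ is the smallest of $p_{\Ao}(1),\ldots,p_{\Ao}(i)$), it suffices to establish the sandwich $q \le q' \le 1-q$. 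The lower bound follows from the ordering $p_{\Ao}(j)\ge p_{\Ao}(i)$ for every $j\in\a\cap\{2,\ldots,i\}$ together with nonemptiness of that set. The upper bound uses $q' \le 1 - p_{\Ao}(1)/\Pr(E) \le 1 - p_{\Ao}(i)/\Pr(E) = 1-q$, where the crucial step is that $p_{\Ao}(1)$ is excluded from $q'$ because $1 \notin \a$.

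The unconditional inequality \eqref{prop5:c} follows from the same sandwich applied without conditioning: set $p = p_{\Ao}(i)$ and $p' = \sum_{j\in\a} p_{\Ao}(j)$, and show $p \le p' \le 1-p$, after which monotonicity of $h_{b}$ on $[0,1/2]$ and symmetry about $1/2$ deliver $h_{b}(p) \le h_{b}(p')$. The lower bound is again the ordering together with $\a\cap\{2,\ldots,i\}\neq\emptyset$; the upper bound is $p' \le 1 - p_{\Ao}(1) \le 1 - p_{\Ao}(i) = 1-p$, once more using $1 \notin \a$. The main obstacle throughout is the non-monotonicity of $h_{b}$, which prevents a one-sided comparison and forces me to establish both ends of each sandwich; the observation that unlocks the upper end in both cases is that the largest probability mass $p_{\Ao}(1)$ is automatically excluded from $\a$ by the very definition of $\PN$.
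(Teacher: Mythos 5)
Your proposal is correct and follows essentially the same route as the paper's proof: collapse the conditioning to the single all-zeros event, express the conditional entropy as $h_{b}$ of a normalized mass $q_{\a}$, and establish the two-sided sandwich $q_{i}\le q_{\a}\le 1-q_{i}$ (and its unconditional analogue for \eqref{prop5:c}) using the ordering of the $p_{\Ao}(j)$ and the fact that the index $1$ is excluded from every $\a\in\PN$. Your writeup is if anything slightly more explicit than the paper's about why positivity of the conditional entropy is equivalent to $\a\cap\{2,\ldots,i\}\neq\emptyset$ and why $q_{i}\le 1/2$, but the substance is identical.
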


\begin{proof}
Inequality \eqref{prop5:a} can be directly verified.  
Now, suppose $H(\A_{\a}|\A_{i+1},\ldots,\A_n) >0$. 
To prove \eqref{prop5:b} and \eqref{prop5:c}, first notice that 
\begin{align*}
& \lefteqn{H(\A_{\a}|\A_{i+1},\ldots,\A_n)}\nonumber \\
& \qquad =\sum_{a_{i+1}, \ldots, a_{n} \in \{0,1\}}p(\A_i=a_{i+1},\ldots,\A_n=a_{n})H(\A_{\a}|\A_{i+1}=a_{i+1},\ldots,\A_n=a_{n})\\
& \qquad \overset{(a)}{=} p(\A_{i+1}=0,\ldots,\A_n=0)H(\A_{\a}|\A_{i+1}=0,\ldots,\A_n=0).
%
\end{align*}
Here, $(a)$ follows from the fact that $\A = j$ if  $\A_{j}=1$.

Consider any $\a \in \PN$ which is not a subset of $\{i+1, \ldots, n \}$. Let 
$\iup \triangleq \{i + 1, \ldots, n \}$, and 
\begin{align}
q_{\a} \triangleq  \frac{\sum_{k\in \a \setminus \iup} p_{\Ao}(k)}{\sum_{j \not\in \iup} p_{\Ao}(j)}.
\end{align}
It can be verified easily that 
\begin{align}\label{eq:15}
H(\A_{\a}|\A_i=0,\ldots,\A_n=0) = h_{b}(q_{a})
\end{align}

As $\a$ is not a subset of $\iup$, there exists $k = 2, \ldots, i$ such that $k\in \a$. In this case,  
\begin{align}\label{eq:prop5c}
\sum_{j \in \a \setminus \iup} p_{\Ao}(j) \ge p_{\Ao} (k) \ge  p_{\Ao}(i).
\end{align}
Hence, 
$
q_{\a}   \ge  q_{i}.
$
On the other hand, 
\begin{align}\label{eq:prop5d}
 \sum_{j:j\in\a \setminus \iup} p_{\Ao}(j) \le  
 \sum_{j: j\in\a\setminus \iup} p_{\Ao}(j) +   p_{\Ao}(1)  - p_{\Ao}(i)  \le \sum_{j:j \not\in \iup} p_{\Ao}(j) - p_{\Ao}(i).
\end{align}
Hence, by dividing both \eqref{eq:prop5c} and \eqref{eq:prop5d} with $\sum_{j \not\in \iup} p_{\Ao}(j)$, we can prove that 
%
\[
1 - q(i) \ge q_{a}  \ge q_{i},
\]
and thus \eqref{prop5:b} holds. Now it remains to prove \eqref{prop5:c}.

Notice again from \eqref{eq:prop5c} and \eqref{eq:prop5d}  that 
%
%
%
%
$p_{\Ao}(i) \le \sum_{j \in \a  } p_{\Ao}(j) $ and 
\begin{align}
\sum_{j:j\in\a } p_{\Ao}(j) \le \sum_{j: j\in\a } p_{\Ao}(j) + p_{\Ao}(1) 
- p_{\Ao}(i) \le 1 - p_{\Ao}(i).
\end{align}
Consequently, 
$
H(\A_{i}) \le H(\A_{\a}) 
$ and the proposition is proved.
\end{proof}


\subsubsection{Uniqueness}

In the previous subsection, we have defined how to construct a set of auxiliary random variables from $\Ao$, and have identified properties of these random variables in relation to the underlying probability distributions. In the following, we will show that the constructed set of auxiliary random variables are in fact sufficient in fully characterizing the underlying probability distribution of $\Ao$. 


Let $\B$ be a random variable such that there exists auxiliary random   variables
\begin{align}
\{ \B_{a}, a\in \PN \}
\end{align}
such that 
\begin{align}
H(\A_{\a} , \a \in \alpha ) & = H(\B_{\a} , \a \in \alpha ), \quad \forall \alpha  \subseteq \PN  \label{eq:uniquenessa}\\
H(\A, \A_{\a} , \a \in \alpha ) & = H(\B, \B_{\a} , \a \in \alpha ), \quad \forall \alpha  \subseteq \PN.\label{eq:uniquenessb}
\end{align}
In other word, $\B$ is a random variable such that there exists auxiliary random variables $\{\B_{\a}, \a \in \PN\}$ such that the entropy function of 
$\{\B, \B_{\a}, \a \in \PN\}$ is essentially the same as that of $\{\Ao, \A_{\a}, \a \in \PN\}$. However, besides knowing the entropies of any subsets of random variables in $\{\B, \B_{\a}, \a \in \PN\}$, it is unknown how the random variables are constructed. Therefore, we cannot deduce immediate that $h_{b}(p_{\B}(n)) = H(\B_{n})$.
Yet, in the following, we will show that the entropy function itself is sufficient to characterize the underlying distribution $\B$, under a mild condition that the size of the sample space of $\B$ is no larger than that of  $\A$.

To achieve our goal, we will need to prove a few intermediate results. 
We will assume the condition that  $\support(\B) \le n$ is satisfied.
Also, for simplicity, we will assume without loss of generality that  $\B$ is a random variables such that 
\begin{align}
p_{\B}(1) \ge p_{\B}(2)    \ge \cdots, \ge p_{\B}(n).
\end{align}

 
\begin{theorem}[Binary random variables] \label{thm1}
Each $\B_{\a}$ is binary. In other words, 
\begin{align}
\support(\B_{\a}) = 2, \quad \forall \a \in \PN.
\end{align}
Also, all the random variables in $\{ \B_{\a}, \a \in \PN\}$ are distinct.
\end{theorem}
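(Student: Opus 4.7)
My plan is to transfer Propositions~\ref{prop:distinct} through~\ref{prop:smallestatomb}, which are all purely entropy-theoretic statements, from the $\A_\a$'s to the $\B_\a$'s via the matching-joint-entropy hypothesis \eqref{eq:uniquenessa}--\eqref{eq:uniquenessb}, and then combine this with the support bound $\support(\B)\le n$ through a tight support-counting chain. Distinctness of the $\B_\a$'s is immediate from Proposition~\ref{prop:distinct}: the inequalities $H(\B_\a|\B_\b)>0$ and $H(\B_\b|\B_\a)>0$ for distinct $\a,\b\in\PN$ preclude any $\B_\a$ from being a function of another $\B_\b$. The lower bound $|\support(\B_\a)|\ge 2$ is likewise immediate from $H(\B_\a)=H(\A_\a)>0$, since $\A_\a$ is a nontrivial binary indicator (both the event $\{\Ao\in\a\}$ and its complement have positive probability).

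For the upper bound $|\support(\B_\a)|\le 2$, I would argue via a tight pigeonhole. From \eqref{eq:uniquenessb} applied with $\alpha=\{\{2\},\ldots,\{n\}\}$, the identity $H(\Ao\mid\A_2,\ldots,\A_n)=0$ transfers to $H(\B\mid\B_2,\ldots,\B_n)=0$, so together with $\support(\B)\le n$ it yields $|\support(\B_2,\ldots,\B_n)|\le n$. On the other hand, the verification of \eqref{prop5:a} in Proposition~\ref{prop:smallestatomb} uses only the positivity of atomic probabilities, so it extends to any permutation $\sigma$ of $\{2,\ldots,n\}$: $H(\B_{\sigma(i)}\mid\B_{\sigma(i+1)},\ldots,\B_{\sigma(n)})>0$ for $i=2,\ldots,n-1$. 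Each such strict inequality forces the joint support to grow by at least one atom when the corresponding singleton is adjoined to the tail. Iterating from $|\support(\B_{\sigma(n)})|\ge 2$ gives $|\support(\B_{\sigma(2)},\ldots,\B_{\sigma(n)})|\ge|\support(\B_{\sigma(n)})|+(n-2)\ge n$, so every intermediate inequality is tight and $|\support(\B_{\sigma(n)})|=2$. Varying $\sigma$ then shows $|\support(\B_i)|=2$ for every singleton index $i\in\{2,\ldots,n\}$.

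For a non-singleton $\a\in\PN$, I would extend the chain argument by building a sequence $\a_1,\ldots,\a_{n-2},\a_{n-1}=\a$ of elements of $\PN$ with two properties: (i) $\{\A_{\a_1},\ldots,\A_{\a_{n-1}}\}$ distinguishes every pair in $\N$ (and hence determines $\Ao$), and (ii) $H(\A_{\a_j}\mid\A_{\a_{j+1}},\ldots,\A_{\a_{n-1}})>0$ for all $j=1,\ldots,n-2$. A greedy construction---placing $\a$ at the tail and prepending singletons $\{i\}$ with $i$ outside the growing conditioned-on support---suffices in most cases, and then transferring the chain to the $\B$-world via \eqref{eq:uniquenessa}--\eqref{eq:uniquenessb} and applying the same pigeonhole yields $|\support(\B_\a)|\le 2$. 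The main technical obstacle is the edge case $\a=\{2,\ldots,n\}$ (and similarly ``large'' $\a$ with $|\a|$ near $n-1$): conditioning on $\A_\a=0$ forces $\Ao=1$, collapsing the remaining support to a single atom and annihilating the conditional entropy of any subsequent singleton in the chain. Handling these cases requires carefully interleaving non-singleton elements of $\PN$ into the chain so that the conditioned-on support remains rich enough at each step for the next $\A_{\a_j}$ to be non-constant; verifying that a valid chain of length $n-1$ always exists for every $\a$ is the delicate combinatorial step of the proof.
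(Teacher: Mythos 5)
Your proposal is correct in outline and follows essentially the same route as the paper: transfer the entropy-theoretic properties of the $\A_\a$'s to the $\B_\a$'s via \eqref{eq:uniquenessa}--\eqref{eq:uniquenessb}, get distinctness from Proposition~\ref{prop:distinct}, and pin down $\support(\B_\a)=2$ by squeezing a strictly increasing chain of $n-1$ joint supports between $2$ and $\support(\B)\le n$. The one substantive difference is that the step you defer as ``the delicate combinatorial step'' --- the existence, for \emph{every} $\a\in\PN$, of a chain of length $n-1$ based at $\a$ with all conditional entropies strictly positive --- is exactly Proposition~\ref{prop:partition} (Partition), which the paper invokes directly; you do not need to construct it greedily or split into singleton and non-singleton cases (your first stage via Proposition~\ref{prop:smallestatomb} and permutations is therefore redundant). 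Two smaller points: your requirement (i), that the chain determine $\Ao$, is superfluous --- the pigeonhole only needs $n-2$ strict increments on top of a base support of size at least $2$, capped by $\support(\B)\le n$; and your worry about the edge case $\a=\{2,\ldots,n\}$ dissolves, because $H(\A_\b\mid\A_\a,\ldots)$ averages over both values of $\A_\a$, and the branch $\A_\a=1$ leaves $\Ao$ ranging over all of $\{2,\ldots,n\}$, which is rich enough for every subsequent singleton in the chain (the paper's proof of Proposition~\ref{prop:partition} handles general $\a$ by relabelling it to a suffix interval $\{i,\ldots,n\}$ and appending $\A_2,\ldots,\A_{n-1}$). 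With Proposition~\ref{prop:partition} cited in place of your greedy construction, your argument closes and coincides with the paper's.
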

\begin{proof}
By \eqref{prop1:a}, \eqref{prop1:b}, \eqref{eq:uniquenessa}, \eqref{eq:uniquenessb}, we can conclude right away that 
all $\B_{a}$ are distinct. It remains to prove that each $\B_{a}$ is indeed binary. 
To see this, first notice that  Proposition \ref{prop:partition} implies that for  any  $\a \in \PN$, there exists $\b_{1} , \ldots , \b_{n-2} \in \PN$ such that   
\begin{align}
H(\A_{\b_k}|\A_{\a},\A_{\b_1},\ldots,\A_{\b_{k-1}})&>0  
\end{align}
for all $k=1, \ldots, n-2$. Hence, by \eqref{eq:uniquenessa} and \eqref{eq:uniquenessb}, 
\begin{align}
H(\B_{\b_k}|\B_{\a},\B_{\b_1},\ldots,\B_{\b_{k-1}})&>0  
\end{align}
for all $k=1, \ldots, n-2$.

Notice that for any random variables $C'$ and $C''$, 
\begin{align}
 H(C', C'')>H(C')\Leftrightarrow \support(C', C'')>\support(C').
\end{align}
Consequently, we prove that 
\begin{align}\label{eq:23}
2\le \support(Y_{\a}) < \support(Y_{\b_{1}}, Y_{\a}) < \ldots <\mathcal \support(Y_{\a}, Y_{\b_{1}} ,\ldots,  Y_{\b_{n-2}}).
\end{align}
Since $Y_{\a}, Y_{\b_{1}} ,\ldots,  Y_{\b_{n-2}}$ are functions of $\B$,  we have 
\begin{align}
\support(Y_{\a}, Y_{\b_{1}} ,\ldots,  Y_{\b_{n-2}}) \leq n. 
\end{align}
Together with \eqref{eq:23}, this implies that 
$\support(Y_{\a}) = 2$. In other words, $\B_{\a}$ is binary. 
\end{proof}

\begin{corollary}\label{cor:1}
$\B$ is positive. In other words, $\support(\B) = n$.
\end{corollary}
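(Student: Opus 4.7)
The plan is to reuse, essentially verbatim, the chain of strict support inequalities already set up inside the proof of Theorem \ref{thm1}, and to observe that it forces $\B$ itself to have at least $n$ atoms.

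Fix any $\a \in \PN$. By Proposition \ref{prop:partition} applied to $\Ao$, there exist $\b_1,\ldots,\b_{n-2} \in \PN$ with $H(\A_{\b_k}\mid \A_{\a},\A_{\b_1},\ldots,\A_{\b_{k-1}}) > 0$ for every $k=1,\ldots,n-2$. Using the entropy-matching hypotheses \eqref{eq:uniquenessa}--\eqref{eq:uniquenessb}, these strict conditional entropy inequalities transfer to the $\B$-side, so, exactly as in the proof of Theorem \ref{thm1}, we obtain a strictly increasing chain of supports
\begin{equation}
2 \le \support(\B_{\a}) < \support(\B_{\a},\B_{\b_1}) < \cdots < \support(\B_{\a},\B_{\b_1},\ldots,\B_{\b_{n-2}}).
\end{equation}
This is a strictly increasing sequence of $n-1$ integers starting from at least $2$, so its largest term is at least $n$.

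On the other hand, each of $\B_{\a},\B_{\b_1},\ldots,\B_{\b_{n-2}}$ is a function of $\B$, whence
\begin{equation}
\support(\B_{\a},\B_{\b_1},\ldots,\B_{\b_{n-2}}) \le \support(\B) \le n,
\end{equation}
where the last inequality is the standing assumption on $\B$ that was invoked before proving Theorem \ref{thm1}. Combining the two bounds yields $\support(\B) = n$, which under the ordering $p_{\B}(1) \ge \cdots \ge p_{\B}(n)$ is precisely the statement that $\B$ has full support on $\{1,\ldots,n\}$, i.e., that $\B$ is positive.

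I do not anticipate any real obstacle: the corollary is essentially a repackaging of the same counting argument used inside Theorem \ref{thm1}, read in the opposite direction. The only subtlety is confirming that the strict conditional entropies on the $\Ao$-side produced by Proposition \ref{prop:partition} do transfer to the $\B$-side, but this is immediate from \eqref{eq:uniquenessa}, which equates the entropy functions on all joint subsets of the auxiliary variables.
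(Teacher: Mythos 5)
Your proof is correct, but it takes a different route from the paper's. The paper proves Corollary \ref{cor:1} by a counting (pigeonhole) argument: Theorem \ref{thm1} shows that the $2^{n-1}-1$ variables $\{\B_{\a},\a\in\PN\}$ are distinct nonconstant binary functions of $\B$, and a random variable with support of size $k$ admits only $2^{k-1}-1$ such functions up to equivalence, so $\support(\B)<n$ is impossible. You instead re-run the strictly increasing support chain from the proof of Theorem \ref{thm1}, observe that its top term is a function of $\B$ and must be at least $2+(n-2)=n$, and conclude $\support(\B)\ge n$ directly; combined with the standing assumption $\support(\B)\le n$ this gives equality. Both arguments are sound and both lean on the same transfer of strict conditional entropies via \eqref{eq:uniquenessa}--\eqref{eq:uniquenessb} together with $H(\B_{\a}\mid\B)=0$. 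Your version has the advantage of reusing machinery already verified inside Theorem \ref{thm1} verbatim (indeed, the inequality $\support(\B_{\a},\B_{\b_1},\ldots,\B_{\b_{n-2}})\le\support(\B)\le n$ is already displayed there), and it sidesteps the slightly delicate count of distinct binary functions up to complementation, which the paper states somewhat tersely; the paper's version is shorter once that counting fact is granted and does not require revisiting the chain construction.
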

\begin{proof}
If $\support(\B) \le n$, then there are at most $2^{n-1} - 1$ distinct binary random variables. Since $|\PN| = 2^{n-1} - 1$,  we can conclude that $\support(\B) = n$.
\end{proof}

\begin{corollary}\label{cor:2}
For any $\a \in \PN$, there exists $\b \in \PN$ such that
\begin{align}
\B_{\a} = \B^{*}_{\b}
\end{align}
where 
\begin{equation}
  \B^{*}_{\b} = \left\{
  \begin{array}{l l}
    1 & \quad \text{if $\B \in \b$}\\
    0 & \quad \text{otherwise}\\
  \end{array} \right.
\end{equation}

Conversely, for any $\b \in \PN$, there exists $\a \in \PN$ such that 
\begin{align}
\B_{\a} = \B^{*}_{\b}.
\end{align}
Equivalently, there exists a one-to-one mapping $\sigma$ between $\PN$ such that 
\begin{align}
\B_{\a} = \B^{*}_{\sigma(\a)}.
\end{align}
\end{corollary}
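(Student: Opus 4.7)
My plan is to show that the family $\{\B_{\a}:\a\in\PN\}$ exhausts, up to the paper's equivalence of random variables, the complete list of nontrivial binary functions of $\B$, and that $\{\B^{*}_{\b}:\b\in\PN\}$ is a canonical set of representatives for the same list. The conclusion then follows purely by a counting/pigeonhole argument.

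First I would verify that each $\B_{\a}$ is a function of $\B$. Applying \eqref{eq:uniquenessb} with $\alpha=\PN$ and using $H(\A_{\a}|\A)=0$ (each $\A_{\a}$ is a function of $\A$ by construction) gives
\begin{align*}
H(\B,\B_{\a}:\a\in\PN)=H(\A,\A_{\a}:\a\in\PN)=H(\A)=H(\B),
\end{align*}
where the final equality is \eqref{eq:uniquenessb} with $\alpha=\emptyset$. Thus $H(\B_{\a}:\a\in\PN\,|\,\B)=0$, so every $\B_{\a}$ is a (binary, by Theorem \ref{thm1}) function of $\B$.

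Next I would count equivalence classes. By Corollary \ref{cor:1}, $\support(\B)=n$, so any nontrivial binary function of $\B$ is the indicator of some $\b\subseteq\{1,\ldots,n\}$ with $1\le|\b|\le n-1$, namely $\B^{*}_{\b}$. Two such indicators are functions of each other iff $\b'=\b$ or $\b'=\{1,\ldots,n\}\setminus\b$, and exactly one member of the complementary pair avoids the element $1$ and hence lies in $\PN$. So $\b\mapsto\B^{*}_{\b}$ is a bijection from $\PN$ onto the set of equivalence classes of nontrivial binary functions of $\B$, and there are exactly $|\PN|=2^{n-1}-1$ such classes. On the other hand, Proposition \ref{prop:distinct} combined with \eqref{eq:uniquenessa} gives $H(\B_{\a}|\B_{\a'})>0$ and $H(\B_{\a'}|\B_{\a})>0$ for any distinct $\a,\a'\in\PN$, so the $|\PN|$ random variables $\{\B_{\a}\}$ lie in pairwise distinct equivalence classes.

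Finally I would conclude by pigeonhole. Since the $|\PN|$ variables $\B_{\a}$ occupy $|\PN|$ distinct equivalence classes, and there are exactly $|\PN|$ classes of nontrivial binary functions of $\B$ with $\{\B^{*}_{\b}:\b\in\PN\}$ serving as one representative per class, the map $\sigma:\PN\to\PN$ sending $\a$ to the unique $\b$ with $\B_{\a}$ equivalent to $\B^{*}_{\b}$ is a well-defined bijection, giving both the forward and converse statements. I expect no serious obstacle; the only subtle point is the first step of showing $\B_{\a}$ is a function of $\B$, which is what unlocks the counting argument and lets the uniqueness hypothesis do its work.
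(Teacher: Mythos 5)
Your proposal is correct and follows essentially the same route as the paper, which disposes of the corollary in one line by observing that there are exactly $|\PN|$ distinct binary functions of $\B$ (up to equivalence) and invoking the pigeonhole principle. You merely make explicit the supporting details the paper leaves implicit --- that each $\B_{\a}$ is a function of $\B$ (via \eqref{eq:uniquenessb}), that $\support(\B)=n$ forces exactly $2^{n-1}-1$ equivalence classes of nontrivial binary functions, and that Theorem \ref{thm1} places the $\B_{\a}$ in distinct classes.
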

\begin{proof}
The corollary follows directly from the fact that  there are exactly $|\PN|$'s distinct binary variables which are functions of $\B$.
\end{proof}

Among all the auxiliary random variables $\{\B_{\a}^{*}, \a \in \PN\}$, 
 the random variables $\B^{*}_{1^{+}}$ and $\B^{*}_{i}$  for $i=2, \ldots, n$ are called ``\emph{indicator}'', to highlight that these variables can derived from an indicator function of $\B$. In particular,  
\begin{align}
\B^{*}_{i} = 
\begin{cases}
1 & \text{ if } \B = i \\
0 & \text{ otherwise,}
\end{cases}
\end{align}
and
\begin{align}
\B^{*}_{1^{+}} = 
\begin{cases}
0 & \text{ if } \B = 1 \\
1 & \text{ otherwise.}
\end{cases}
\end{align}

{\bf Remark: } Note that when $n=2$, there is only one indicator random variable because $\B^{*}_{1^{+}}$ and $\B^{*}_{2}$ are equivalent (meaning that they are functions of each other).

{\bf Remark: }
For $i=2, \ldots, n$ where $n\ge 3$,   the ``\emph{atomic probability}'' of the indicator random variable $\B^{*}_{i}$ is defined as  $p_{\B}(i)$. Similarly, 
the atomic probability of  $\B^{*}_{1^{+}}$ is defined as  $p_{\B}(1)$. In the special case when $n=2$ where $\B^{*}_{2}$ and $\B^{*}_{1^{+}}$ are equivalent, the atomic probability of $\B^{*}_{2}$ and $\B^{*}_{1^{+}}$ will conventionally be defined as 
$p_{\B}(2)$.

While it is explicitly defined what indicator random variables are, it is not explicitly clear which random variable $\B_{\a}$ is an indicator random variable as $\sigma$ is unknown. In the following, by exploiting properties of the entropy function of the auxiliary random variables, we can in fact identify if $\B_{\a}$ is   an indicator random variable or not.

\begin{theorem}\label{thm2}
$\B_{n}$ is an indicator such that  
\begin{align}
H(\B_{n}) = H(\B^{*}_{n}).
\end{align}
In other words, its atomic probability  is equal to  $h_{b}^{-1}(H(Y_{n}))$.
\end{theorem}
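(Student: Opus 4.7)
The plan is to combine the minimum-entropy characterization from Proposition~\ref{prop:smallestatom} with the bijective correspondence between $\{\B_\a\}$ and $\{\B^{*}_\b\}$ from Corollary~\ref{cor:2}. By Corollary~\ref{cor:2}, there exists $\b \in \PN$ with $\B_n = \B^{*}_{\b}$. Applying Proposition~\ref{prop:smallestatom} to $\Ao$ gives $H(\A_n) = \min_{\a \in \PN} H(\A_\a)$. The entropy-matching condition \eqref{eq:uniquenessa}, restricted to singleton $\alpha$, yields $H(\B_\a) = H(\A_\a)$ for every $\a \in \PN$, so $H(\B_n) = \min_{\a \in \PN} H(\B_\a)$. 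Since $\{\B_\a : \a \in \PN\}$ coincides, as a collection of random variables, with $\{\B^{*}_\b : \b \in \PN\}$ (Corollary~\ref{cor:2}), this minimum equals $\min_{\b \in \PN} H(\B^{*}_\b)$. Finally, $\B$ is positive with $\support(\B) = n$ (Corollary~\ref{cor:1}) and ordered so that $p_{\B}(1) \ge \cdots \ge p_{\B}(n) > 0$, so Proposition~\ref{prop:smallestatom} applied directly to $\B$ gives $\min_{\b} H(\B^{*}_\b) = H(\B^{*}_n)$, and hence $H(\B_n) = H(\B^{*}_n) = h_b(p_{\B}(n))$.

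It remains to verify that the $\b$ with $\B_n = \B^{*}_{\b}$ is either a singleton in $\{2,\ldots,n\}$ or the set $\{2,\ldots,n\} = 1^{+}$, so that $\B_n$ qualifies as an \emph{indicator} in the paper's sense. Writing $s \triangleq \sum_{i \in \b} p_{\B}(i)$, the equality $h_b(s) = h_b(p_{\B}(n))$ forces $s \in \{p_{\B}(n), 1 - p_{\B}(n)\}$ by the two-to-one structure of $h_b$ on $[0,1]$. In the first case, each summand $p_{\B}(i)$ for $i \in \b \subseteq \{2,\ldots,n\}$ is at least $p_{\B}(n) > 0$, so the sum can equal $p_{\B}(n)$ only if $\b = \{i\}$ is a singleton with $p_{\B}(i) = p_{\B}(n)$. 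In the second case, the complement $\N \setminus \b$ has total mass $p_{\B}(n)$ and necessarily contains the index $1$; since $p_{\B}(1) \ge p_{\B}(n)$, the only way this can happen is $\N \setminus \b = \{1\}$, i.e., $\b = \{2,\ldots,n\} = 1^{+}$. Either way, $\B_n$ is an indicator, and its atomic probability $h_b^{-1}(H(\B_n))$ coincides with $p_{\B}(n)$ by the matching entropy.

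The main obstacle is the case analysis in the second paragraph: the minimum-entropy transfer is a short chain of already-established results, but ruling out the possibility that $\b$ is some intermediate-sized subset of $\{2,\ldots,n\}$ requires simultaneously exploiting the ordering of the probabilities, the exclusion $1 \notin \b$ built into the definition of $\PN$, and both preimage branches of $h_b$.
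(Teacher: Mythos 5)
Your proposal is correct and follows essentially the same route as the paper's own proof: transfer the minimum-entropy property of Proposition~\ref{prop:smallestatom} to $\B_{n}$ via the entropy-matching conditions \eqref{eq:uniquenessa}--\eqref{eq:uniquenessb}, identify $\B_{n}$ with some $\B^{*}_{\b}$ via Corollary~\ref{cor:2}, and then split on the two preimage branches of $h_{b}$ to force $\b$ to be a singleton or $\{2,\ldots,n\}=1^{+}$. Your handling of the second branch (the complement has mass $p_{\B}(n)$, contains the index $1$, and $p_{\B}(1)\ge p_{\B}(n)$ forces the complement to be exactly $\{1\}$) is the same argument the paper gives, stated slightly more cleanly.
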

\begin{proof}
In Theorem \ref{thm1}, we proved that the auxiliary random variables 
$\{\B_{a}, a\in \PN\}$ are all distinct, and binary.
By \eqref{prop4:a},  \eqref{eq:uniquenessa}, \eqref{eq:uniquenessb}, we can deduce  that 
\begin{align} 
\min_{\a \in \PN} H(\B_{\a}) = H(\B_{n}).
\end{align}

At the same time, by Proposition \ref{prop:smallestatom},  $\min_{\a \in \PN} H(\B^{*}_{\a}) = H(\B^{*}_{n})$, we can thus conclude right away that 
$H(\B_{n}) = H(\B^{*}_{n}) = h_{b}( p_{\B}(n))$. However, it is not sufficient to conclude immediately
that $\B_{n}  = \B^{*}_{n}$ and hence an indicator random variable. In the following, we will show that this is indeed the case.

By  Corollary \ref{cor:2},  there exists $\a \in \PN$ such that 
\begin{align}
\B_{n} = \B^{*}_{\a}.
\end{align}
As $ H(\B^{*}_{\a} ) = H( \B^{*}_{n}) $,  we must have 
\begin{align}\label{eq:thm2a}
p(n) = \sum_{i\in\a} p_{\B}(i)  
\end{align}
or 
\begin{align}\label{eq:thm2b}
\sum_{i\in\a} p_{\B}(i)  = 1- p(n).
\end{align}

Suppose the case \eqref{eq:thm2a} holds. Since $p_{\B}(i)   \ge p_{\B}(n)$ for all $i$, 
$|\a|  =  1$. Also, $p_{\B}(\a) = p(n)$. 
Now, suppose the other case \eqref{eq:thm2b} holds. Thus, 
\begin{align}
\sum_{i\in\a} p_{\B}(i)  + p_{\B}(n) = 1.
\end{align}
However, as $\a \subseteq \{2, \ldots, n\}$ and 
$
\sum_{i=1}^{n} p_{\B}(i) = 1
$,
we can immediately see   that 
$i$) $p_{\B}(1) =  p_{\B}(n) $ and hence 
$p_{\B}(1) = \cdots = p_{\B}(n) = {1}/{n}$, and $ii$) $\a = \{2, \ldots, n\}$. 
Again, it implies that $\B_{n}$ is a singleton. In addition, the atomic probability of the indicator random variable  is equal to $p_{\B}(n)$. Therefore, in both cases, we proved that $\B^{*}_{a} = \B_{n}$ is an indicator random variable, with atomic probability equal to $p_{\B}(n)$.
\end{proof}

By Theorem \ref{thm2},  the atomic probability of the indicator random variable  $\B_{n}$ is the smallest. Therefore, we may  assume\footnote{
This can be achieved by properly renaming the alphabets of $\B$.
} without loss of generality that 
$\B_{n}$ is in fact equal to $\B^{*}_{n}$.
In the following, we will prove by induction that 
$\B_{i}$ can in fact be viewed as   $\B^{*}_{i}$, again after properly renaming 
alphabets of $\B$.


\begin{theorem}
Let $i \ge 2$. Suppose  
\begin{align}\label{eq:thm3a}
\B_{j} = \B^{*}_{j}, \: \forall j= i+1, \ldots, n.
\end{align}
Then $\B_{i}$ is an indicator random variable  with
 atomic probability of $\B_{i}$  equal to $p_{\B}(i)$. 
 
 When $n \ge 3$, 
 $\B_{1^{+}}$ is also an indicator random variable with atomic probability equal to $p_{\B}(1)$.
\end{theorem}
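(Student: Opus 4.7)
The plan is to apply Corollary~\ref{cor:2} to write $\B_{i} = \B^{*}_{\sigma(i)}$ for some $\sigma(i) \in \PN$, and then to force $\sigma(i) = \{i\}$ after an appropriate relabeling of the atoms of $\B$. I would use Proposition~\ref{prop:smallestatomb} on the $\Ao$-side, which identifies $\A_{i}$ as the (both conditional and unconditional) entropy-minimizer among auxiliaries whose conditional entropy given $\A_{i+1},\dots,\A_{n}$ is positive. Transferring this via the entropy isomorphism \eqref{eq:uniquenessa}--\eqref{eq:uniquenessb} and using the inductive hypothesis $\B_{j} = \B^{*}_{j}$ for $j>i$, the event $\{\B_{i+1}=0,\dots,\B_{n}=0\}$ collapses to $\{\B \in \{1,\dots,i\}\}$ (which has positive probability by Corollary~\ref{cor:1}), and the positive-conditional-entropy condition on $\B_{\a}$ translates, through Proposition~\ref{prop:secAux6}, to the combinatorial requirement $\sigma(\a) \cap \{2,\dots,i\} \neq \emptyset$; by bijectivity of $\sigma$ (Corollary~\ref{cor:2}), this restricts $\sigma$ to a bijection on the valid class.

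The next step is to show that $\sigma(i)$ minimizes $H(\B^{*}_{\tau}) = h_{b}(p_{\B}(\tau))$ over all $\tau \in \PN$ satisfying $\tau \cap \{2,\dots,i\} \neq \emptyset$. Using $p_{\B}(1) \ge \cdots \ge p_{\B}(n) > 0$, the only candidate extremes on this class are $\tau = \{i\}$ (giving $p_{\B}(\tau) = p_{\B}(i)$) and $\tau = \{2,\dots,n\}$ (giving $p_{\B}(\tau) = 1 - p_{\B}(1)$); the symmetry and unimodality of $h_{b}$ together with $p_{\B}(i) \le p_{\B}(1)$ and $p_{\B}(1)+p_{\B}(i) \le 1$ force the former to be the minimum, strictly so in the generic (strictly decreasing) case. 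Hence $\sigma(i) = \{i\}$, so that $\B_{i} = \B^{*}_{i}$ is an indicator with atomic probability $p_{\B}(i)$. In degenerate cases where ties in $p_{\B}$ admit several minimizers, I would relabel the atoms of $\B$ by the same renaming argument used just after Theorem~\ref{thm2} to restore $\sigma(i) = \{i\}$.

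For the $\B_{1^{+}}$ claim, the plan is to apply the main step recursively from $i = n-1$ down to $i = 2$, with Theorem~\ref{thm2} as the base case, to obtain $\B_{j} = \B^{*}_{j}$ for every $j \in \{2,\dots,n\}$ and hence $p_{\B}(k) = p_{\Ao}(k)$ for every $k$. Writing $\B_{1^{+}} = \B^{*}_{\tau}$ via Corollary~\ref{cor:2} and equating $H(\B_{1^{+}}) = H(\A_{1^{+}}) = h_{b}(p_{\B}(1))$ then forces $\sum_{k \in \tau} p_{\B}(k) \in \{p_{\B}(1),\,1-p_{\B}(1)\}$. Distinctness (Theorem~\ref{thm1}) already rules out $\tau = \{j\}$ for every $j = 2,\dots,n$, so generically only $\tau = \{2,\dots,n\}$ survives and $\B_{1^{+}} = \B^{*}_{1^{+}}$ is an indicator with atomic probability $p_{\B}(1)$. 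The main obstacle throughout is the careful bookkeeping around ties in the distribution of $\B$, where distinct subsets of $\{1,\dots,n\}$ can produce equal atomic masses and hence equal entropies; this is resolved by the same alphabet-relabeling device already used just after Theorem~\ref{thm2}.
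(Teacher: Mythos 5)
Your treatment of the first claim (that $\B_{i}$ is an indicator) is essentially the paper's own argument, reorganized as an explicit minimization: the paper likewise transfers Proposition~\ref{prop:smallestatomb} through \eqref{eq:uniquenessa}--\eqref{eq:uniquenessb}, reduces the conditional entropy given $\B^{*}_{i+1},\ldots,\B^{*}_{n}$ to $h_{b}(q_{\a})$ on the event $\{\B\notin\iup\}$, and then does the same case analysis ($\sum_{k\in\tau}p_{\B}(k)$ equal to $p_{\B}(i)$ or to its complement), with the degenerate branch forcing $p_{\B}(1)=\cdots=p_{\B}(i)$ and $\tau=\{2,\ldots,n\}$. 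One small caution: the paper uses \emph{both} the conditional equality \eqref{eq:35} and the unconditional equality \eqref{eq:36}; your version leans only on the unconditional one plus the class restriction $\tau\cap\{2,\ldots,i\}\neq\emptyset$, which does still close (any $\tau$ in the class with $\sum_{k\in\tau}p_{\B}(k)=p_{\B}(i)$ must be a singleton because it contains some $j\le i$ with $p_{\B}(j)\ge p_{\B}(i)$), but you should say this explicitly rather than asserting that $\{i\}$ and $\{2,\ldots,n\}$ are the only candidates.

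The argument for $\B_{1^{+}}$ has a genuine gap. Matching the single number $H(\B_{1^{+}})=h_{b}(p_{\B}(1))$ and excluding singletons by distinctness does not pin down $\sigma(1^{+})=\{2,\ldots,n\}$: there can be non-singleton, non-full subsets $\tau$ with $\sum_{k\in\tau}p_{\B}(k)\in\{p_{\B}(1),\,1-p_{\B}(1)\}$. For instance, with $n=4$ and $p_{\B}=(0.4,0.3,0.2,0.1)$ --- a strictly decreasing, entirely ``generic'' distribution --- the set $\tau=\{2,4\}$ gives $h_{b}(0.4)=h_{b}(p_{\B}(1))$, survives the distinctness test, and $\B^{*}_{\{2,4\}}$ is \emph{not} an indicator; no relabeling of the alphabet of $\B$ can convert a two-element subset into the $(n-1)$-element set $\{2,\ldots,n\}$, so the tie-breaking device you invoke cannot repair this. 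The paper closes this case differently, using conditional (functional-dependence) information rather than unconditional entropies: by Proposition~\ref{prop:secAux6}, $H(\A_{1^{+}}\mid \A_{j}, j\in\b)>0$ for every proper subset $\b\subsetneq\{2,\ldots,n\}$, and transferring this through the entropy identities, together with $\B_{j}=\B^{*}_{j}$ for $j=2,\ldots,n$, forces $\sigma(1^{+})\not\subseteq\b$ for all such $\b$, whence $\sigma(1^{+})=\{2,\ldots,n\}$. (In the example above, $H(\B^{*}_{\{2,4\}}\mid\B^{*}_{2},\B^{*}_{4})=0$, so this criterion correctly eliminates $\{2,4\}$.) You need this conditional-entropy step; the unconditional matching alone is insufficient.
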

\begin{proof}
First, recall that 
$\{\B_{\a} , \a \in \PN \}$ and $\{\B^{*}_{\a} , \a \in \PN \}$ are two identical sets of random variables.
By \eqref{prop5:b}-\eqref{prop5:c} and  \eqref{eq:uniquenessa}, we have $H(\B_{i} | \B_{i+1},\ldots,\B_n)  > 0$. 
On the other hand, $H(\B^{*}_{i} | \B_{i+1},\ldots,\B_n) > 0 $ by \eqref{prop:smallestatomb} and \eqref{eq:thm3a}. 
By Proposition \ref{prop:smallestatomb} and also \eqref{eq:uniquenessa}, 
\begin{align*}
H(\B_{i}|\B_{i+1},\ldots,\B_n) \le  H(\B^{*}_{i} | \B_{i+1},\ldots,\B_n) \le H(\B_{i}|\B_{i+1},\ldots,\B_n)
\end{align*}
and 
\begin{align*}
H(\B_{i}) \le  H(\B_{i}^{*}) \le H(\B_{i}). 
\end{align*}
%
Thus, we proved that 
\begin{align}
H(\B_{i}|\B_{i+1}^{*},\ldots,\B_n^{*}) & = H(\B^{*}_{i}|\B_{i+1} ,\ldots,\B_n ) \label{eq:35}\\
H(\B_{i}) &=  H(\B^{*}_{i}).  \label{eq:36}
\end{align}

As in \eqref{eq:15}, we can show that 
\begin{align}
H(\B^{*}_{\a}|\B^{*}_{i+1},\ldots,\B^{*}_n)    & = p(\B^{*}_{i+1}=0,\ldots,\B^{*}_n=0)H(\B^{*}_{\a}|\B^{*}_{i+1}=0,\ldots,\B^{*}_n=0).
\end{align}

Let 
\begin{align}
q_{\a} \triangleq  \frac{\sum_{k\in \a \setminus \iup} p_{\B}(k)}{\sum_{j \not\in \iup} p_{\B}(j)}.
\end{align}
Then,  
$
H(\B^{*}_{\a}|\B^{*}_i=0,\ldots,\B^{*}_n=0) = h_{b}(q_{a})
$.

By \eqref{eq:35}, we have  
$
h_{b}(q_{a}) = h_{b}(q_{i})
$, or equivalently,  
\begin{align}\label{eq:37}
\frac{\sum_{k\in \a \setminus \iup} p_{\B}(k)}{\sum_{j \not\in \iup} p_{\B}(j)} = \frac{ p_{\B}(i) }{\sum_{j \not\in \iup} p_{\B}(j)}.
\end{align}
or 
\begin{align}\label{eq:38}
\frac{\sum_{k\in \a \setminus \iup} p_{\B}(k)}{\sum_{j \not\in \iup} p_{\B}(j)} = 1 - \frac{ p_{\B}(i) }{\sum_{j \not\in \iup} p_{\B}(j)}.
\end{align}

Suppose \eqref{eq:37} holds. As $p_{\B}(k) \ge p_{\B}(i)$ for all $k \in \a \setminus \iup$, we can conclude right away that  $\a \setminus \iup$ contains only one element whose probability is equal to $p_{\B}(i)$. 
In addition, due to \eqref{eq:36}, either $\a \cap \iup$ is empty or
\begin{align}
\sum_{k\in \a  } p_{\B}(k) = 1 - p_{\B}(i). 
\end{align}
 If $\a \cap \iup$ is empty, then $\a$ has only one element and the theorem holds. 
 Now, suppose $\sum_{k\in \a  } p_{\B}(k) = 1 - p_{\B}(i)$. 
 This must imply that 
 \begin{align}
 p_{\B}(1) = p_{\B}(i)
 \end{align}
 and $\a = \{2, \ldots, n \}$.
In this case, of course $\B_{i}$ is still an indicator random variable, with its atomic probability equal to $p_{\B}(i)$.

Now, suppose on the other hand that   \eqref{eq:38} holds. Then 
 \begin{align}
{ p_{\B}(i) } +  {\sum_{k\in \a \setminus \iup} p_{\B}(k)} = {\sum_{j \not\in \iup} p_{\B}(j)} . 
 \end{align}
This implies that 1) $ p_{\B}(1) =  p_{\B}(i)$ and 2) 
$\a \setminus \iup = \{2, \ldots, i\}$. Now, by \eqref{eq:36}, either
\begin{align}
\sum_{k\in \a  } p_{\B}(k) = 1 - p_{\B}(i). 
\end{align}
This further implies that $\a = \{2, \ldots, n\}$. Hence, again $\B_{i}$ is an indicator random variable with atomic probability equal to $p_{\B}(i)$.

In all the cases,  we have proved that for $i=2, \ldots, n$, 
$\B_{i}$ is an indicator random variable whose atomic probability is $p_{\B}(i)$.
When $n=2$, there is only one indicator random variable. But when $n \ge 3$, there will be $n$ such indicator random variables and $\B_{1^{+}}$ is also an indicator random variable. 

To see this, first recall Proposition \ref{prop:secAux6} that for all $\a,\b \in \PN$,
\begin{align}
H(\B^{*}_{\a} | \B^{*}_{i}, i\in\b) = 0
\end{align}
if and only if $\a \subseteq \b$. Also, 
$
H(\B_{1^{+}} | \B_{i }, i \in\a) > 0 
$
for all $\a \neq 1^{+}$. Sincc $\B_{i} = \B^{*}_{i}$ for $i=2, \ldots, n$, it implies that  
$\B_{1^{+}} = \B^{*}_{1^{+}}$. 
\end{proof}

\begin{theorem}\label{thm:singlemain}
Suppose $\Ao$ is an $n$-ary random variable taking values from the set 
$\{1, \ldots, n\}$,  and has a positive probability distribution. 
For any $\a \subseteq \PN$, let $\A_{\a}$ be the auxiliary random variable such that 
\begin{equation}
  \A_{\a} = \left\{
  \begin{array}{l l}
    1 & \quad \text{if $X \in \a$}\\
    0 & \quad \text{otherwise.} 
  \end{array} \right.
\end{equation}

Let $\B$ be another $n$-ary random variable such one can construct a set of  random variables $\{\B_{\a},\a \subseteq  \PN\}$ such that 
\begin{align}
H(\B_{\a }, \a \in\alpha) & = H(\A_{a},a\in\alpha), \quad \forall \alpha \subseteq \PN \\
H( \B_{a} | \B) & =0, \quad \forall a \in \PN.
\end{align}
Then the probability distributions of $\Ao$ and $\B$ are permutations of each other. 
\end{theorem}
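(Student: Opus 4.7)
The plan is to assemble Theorem~\ref{thm:singlemain} directly from the chain of auxiliary results already proved (Theorem~\ref{thm1}, Corollary~\ref{cor:1}, Corollary~\ref{cor:2}, Theorem~\ref{thm2}, and the induction theorem that follows). The hypothesis that $Y$ is $n$-ary supplies the bound $\support(Y)\le n$ that all of those results require, and the hypotheses $H(Y_{\b_a}|Y)=0$ together with matching joint entropies let me transport every structural property established for the $\{X_\a\}$ over to $\{Y_\a\}$.

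First I would invoke Theorem~\ref{thm1} to conclude that all $Y_\a$ are binary and mutually distinct as random variables; the entropy-matching hypothesis transfers Propositions~\ref{prop:distinct} and \ref{prop:partition} from the $X$-side to the $Y$-side, and the counting argument inside Theorem~\ref{thm1} then forces $\support(Y_\a)=2$. Next, Corollary~\ref{cor:1} applied to $Y$ gives $\support(Y)=n$ (using that $|\PN|=2^{n-1}-1$ distinct binary functions of $Y$ are available only if $Y$ attains all $n$ values with positive probability); in particular $p_Y(i)>0$ for all $i$, so I may order $p_Y(1)\ge\cdots\ge p_Y(n)>0$. Corollary~\ref{cor:2} then produces a bijection $\sigma:\PN\to\PN$ with $Y_\a=Y^{*}_{\sigma(\a)}$, where the starred random variables are the indicator-style auxiliaries built from $Y$ in the same way $X_\a$ is built from $X$.

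The content of the theorem now reduces to showing that $\sigma$ can be chosen to fix singletons, so that the atomic probabilities of $Y$ coincide, up to permutation, with the atomic probabilities of $X$. Theorem~\ref{thm2} identifies $Y_n$ as an indicator with atomic probability $p_Y(n)=h_b^{-1}(H(X_n))$, after a harmless relabeling of the alphabet of $Y$. The induction theorem immediately preceding Theorem~\ref{thm:singlemain} then gives, for $i=n-1,n-2,\ldots,2$, that once $Y_{i+1},\ldots,Y_n$ have been identified with $Y^{*}_{i+1},\ldots,Y^{*}_n$, the variable $Y_i$ is forced to be an indicator with atomic probability $p_Y(i)=h_b^{-1}(H(X_i))$ modulo a further alphabet relabeling, and $Y_{1^{+}}$ accounts for $p_Y(1)$. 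Stringing these steps together yields $p_Y(i)=p_X(i)$ for each $i$, so the distributions of $X$ and $Y$ are permutations of one another.

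The main obstacle in writing this out cleanly is bookkeeping rather than a new idea: the renamings of $Y$'s alphabet invoked in Theorem~\ref{thm2} and in the induction step must be combined into a single permutation of $\{1,\ldots,n\}$, and I need to check that none of the intermediate relabelings disturb the entropy identities \eqref{eq:uniquenessa}--\eqref{eq:uniquenessb} upon which subsequent steps rely. Since renaming is just a bijection of the sample space it preserves the entropy function, so this is a routine but necessary check; once it is made, the theorem follows by direct concatenation of the preceding results, with no additional inequality manipulation required.
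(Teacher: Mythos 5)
Your proposal is correct and follows essentially the same route as the paper: the paper's own proof of Theorem~\ref{thm:singlemain} is just a concatenation of Theorem~\ref{thm1}, Corollaries~\ref{cor:1}--\ref{cor:2}, Theorem~\ref{thm2} and the induction theorem, concluding $p_{\B}(i)=h_{b}^{-1}(H(\B_{i}))=h_{b}^{-1}(H(\A_{i}))=p_{\Ao}(i)$ exactly as you do. Your extra remark about consolidating the successive alphabet relabelings into one permutation is a bookkeeping point the paper dispatches in a footnote, not a difference in approach.
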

\begin{proof}
In the previous results, we proved that 
$\B_{i}, \ldots, \B_{n}$ and $\B_{1^{+}}$ are in fact  all indicator random variables such that 
$\B_{i} = \B^{*}_{i}$ for $i=2, \ldots, n$ (up to permutations). Hence, 
\begin{align}
p_{\B}(i) = h_{b}^{-1}( H(\B_{i}) )
\end{align}
for $i=2, \ldots, n$. As $H(\A_{i})  = H(\B_{i}) $, $p_{\B}(i) = p_{\Ao}(i)$ for all $i$. The theorem is proved.
\end{proof}

\subsection{Extension for Multiple Random Variables}
In the following, we will further extend Theorem \ref{thm:singlemain}  to multiple random variable cases.

Suppose $\Ao=(\Ao_{1},\ldots, \Ao_{m})$ is a set of random variables defined over the sample space $\prod_{i=1}^{m}\X_{i} $ such that 
\begin{enumerate}
\item
$|\X_{i}| \ge 3$ for all $i=1,\ldots, m$; 

\item 
the probability distribution of $(\Ao_{1},\ldots, \Ao_{m})$ is positive.
\end{enumerate}

Let  ${\bf 1} \in \prod_{i=1}^{n}\X_{i} $ and $\Omega \triangleq  \prod_{i=1}^{n}\X_{i} \setminus \{ {\bf 1} \}$. For any $\a  \subseteq \Omega$, let 
\begin{equation}
  \A_{\a} = \left\{
  \begin{array}{l l}
    1 & \quad \text{if $(X_{1},\ldots, X_{m}) \in \a$}\\
    0 & \quad \text{otherwise.}
  \end{array} \right.
\end{equation}

Now, consider a  set of random variables  $\B = (\B_{1}, \ldots, \B_{m})$  defined over the same sample space $\prod_{i=1}^{n}\X_{i} $. Suppose that there exists random variables $\{B_{\a}, \a \in \Omega\}$ such that 
\begin{align}
H(B_{\a}, \a \in \alpha) & = H(\A_{\a}, \a \in \alpha), \quad \forall \alpha \subseteq   {\mathsf P} (\Omega ) \\
H(B_{a} | \B) &  = H(\A_{a} | \Ao) = 0, \quad \forall \a \in \Omega.
\end{align}
Then the probability distributions of $\Ao=(\Ao_{1},\ldots, \Ao_{m})$ and $\B = (\B_{1}, \ldots, \B_{m})$ are permutations of each other. Or more precisely, there exists permutations $\sigma_{i}$ on the set $\X_{i}$ for $i=1,\ldots, m$ such that the probability distributions of $(X_{1},\ldots, X_{m})$ and 
$(\sigma_{1}(\B_{1}), \ldots, \sigma_{n}(\B_{n}))$ are exactly the same.

%
%
%
%
%
%


\section{Conclusion}
In this paper we showed that the known outer bounds are not tight and can be improved by introducing auxiliary random variables. This led us to new improved bounds $\mathcal R'_{\text{cs}} (\overline{\Gamma^*})$ and $\mathcal R'_{\text{cs}}(\Gamma)$. The interrelation of the bounds presented in this paper is summarized as follows.
\begin{equation*}
\mathcal R_{\text{cs}} \subset
\left\{
\begin{array}{c}
\mathcal R'_{\text{cs}}(\overline{\Gamma^*}) \subset \mathcal R_{\text{cs}}(\overline{\Gamma^*})\\
\mathcal R'_{\text{cs}}(\Gamma)
\end{array} \right\}\subset \mathcal R_{\text{cs}}(\Gamma) \subset \mathcal R_{FD}
\end{equation*}
We also discussed construction of auxiliary random variables to tighten the bounds. We established that theoretically it is feasible to incorporate the knowledge of probability distribution of random variables completely in the entropy domain by construction of auxiliary random variables. However, the construction of auxiliary random variables describing the exact probability distribution of a given random variables in the proposition is very complex. Hence, it remains an interesting practical problem to find simpler construction methods for auxiliary random variables.

\bibliographystyle{ieeetr}
\bibliography{network}

\end{document}